\documentclass[11pt]{article}

\usepackage[a4paper,margin=1in]{geometry}
\usepackage[T1]{fontenc}
\usepackage[utf8]{inputenc}
\usepackage{lmodern}
\usepackage{microtype}
\usepackage{amsmath,amssymb,amsthm,mathtools}
\usepackage{booktabs}
\usepackage{longtable}
\usepackage{enumitem}
\usepackage{algorithm}
\usepackage{algpseudocodex}
\usepackage{tikz}
\usetikzlibrary{matrix,fit,positioning,backgrounds,arrows.meta}
\usepackage[hidelinks]{hyperref}

\theoremstyle{definition}
\newtheorem{theorem}{Theorem}
\newtheorem{example}{Example}
\newtheorem{lemma}{Lemma}
\newtheorem{corollary}{Corollary}

\newtheorem{remark}{Remark}

\usepackage[capitalise]{cleveref}

\DeclareMathOperator{\rle}{\mathsf{rle}}
\DeclareMathOperator{\ins}{\mathsf{ins}}
\DeclareMathOperator{\del}{\mathsf{del}}
\DeclareMathOperator{\rep}{\mathsf{rep}}
\DeclareMathOperator{\equ}{\mathsf{equ}}
\DeclareMathOperator*{\argmax}{arg\,max}

\newcommand{\N}{N}
\newcommand{\M}{M}
\newcommand{\eps}{\varepsilon}
\newcommand{\Z}{\mathbb{Z}}
\newcommand{\boxij}{\square_{i,j}}

\def\problembox#1{%
	\vspace{2mm}%
	\noindent\fbox{%
	\begin{minipage}{.985\textwidth}%
		#1
	\end{minipage}%
	}%
	\vspace{2mm}%
}

\title{MacCorles: Minimum Alignment Cost Computation on Run-Length Encoded Strings}
\author{Wing-Kai Hon \and Dominik K\"{o}ppl \and Jun-Hong Wang}
\date{}

\begin{document}
\maketitle

\begin{abstract}
We study a tie-breaking variant of the longest common subsequence problem on run-length encoded strings. 
Given two strings, the goal is first to maximize the number of equal aligned character pairs, as in the classical longest common subsequence problem, and then, among all such alignments, to minimize the alignment length. Equivalently, after maximizing the number of equal pairs, we minimize the number of insertions and deletions. We show that this problem admits a simple block-boundary dynamic program. If the input strings have lengths \(\N\) and \(\M\), and their run-length encodings have \(n\) and \(m\) runs, respectively, the algorithm runs in \(O(m\N+n\M)\) time using $O(nm)$ space. 
The algorithm treats every pair of runs as a homogeneous block with an explicit transfer function and stores dynamic-programming values only on run boundaries.
\end{abstract}

\section{Introduction}

The longest common subsequence (LCS) problem is a classical measure of similarity between two strings. It asks for an alignment maximizing the number of equal aligned character pairs. In many applications, however, several LCS alignments may exist, and these alignments may differ substantially in their use of gaps. This motivates the following natural tie-breaking variant: first maximize the number of equal pairs, and then, among all maximum-equality alignments, choose one of minimum alignment length.

\noindent
\begin{minipage}{0.6\linewidth}
Consider two DNA strings
\(
	X=\texttt{TTAACC}
	\text{~and~}
	Y=\texttt{CCAAGG}.
\)
The usual LCS dynamic program (DP) builds the table on the right that gives evidence of two LCSs of length two, \(\texttt{AA}\) and \(\texttt{CC}\),
where rows are indexed by prefixes of \(X\) and columns by prefixes of \(Y\). 
The final value is two, but the table alone does not express that the LCS \(\texttt{AA}\) can be realized without gaps whereas \(\texttt{CC}\) cannot.
In fact, the alignment realizing \(\texttt{CC}\) is
\end{minipage}
\hfill
\begin{minipage}{0.28\linewidth}
\[
\begin{array}{c|rrrrrrr}
	 & \eps & \texttt{C} & \texttt{C} & \texttt{A} & \texttt{A} & \texttt{G} & \texttt{G}\\
\hline
\eps      & 0 & 0 & 0 & 0 & 0 & 0 & 0\\
\texttt{T} & 0 & 0 & 0 & 0 & 0 & 0 & 0\\
\texttt{T} & 0 & 0 & 0 & 0 & 0 & 0 & 0\\
\texttt{A} & 0 & 0 & 0 & 1 & 1 & 1 & 1\\
\texttt{A} & 0 & 0 & 0 & 1 & 2 & 2 & 2\\
\texttt{C} & 0 & 1 & 1 & 1 & 2 & 2 & 2\\
\texttt{C} & 0 & 1 & 2 & 2 & 2 & 2 & 2
\end{array}
\]
\end{minipage}
\[
\begin{array}{cccccccccc}
\texttt{T} & \texttt{T} & \texttt{A} & \texttt{A} & \texttt{C} & \texttt{C} & - & - & - & -\\
- & - & - & - & \texttt{C} & \texttt{C} & \texttt{A} & \texttt{A} & \texttt{G} & \texttt{G}
\end{array}
\]
Such a result can be undesirable under the light that the alignment based on the LCS \(\texttt{AA}\) allows the gap-free alignment
\[
\begin{array}{cccccc}
\texttt{T} & \texttt{T} & \texttt{A} & \texttt{A} & \texttt{C} & \texttt{C}\\
\texttt{C} & \texttt{C} & \texttt{A} & \texttt{A} & \texttt{G} & \texttt{G}
\end{array}
\]
where the first two and last two columns are replacements. 
Our goal is to report the second alignment, which has the same number of equal pairs as the first but fewer insertions and deletions.
We also consider this problem for run-length encoded strings. 
Run-length encoding is particularly suitable for inputs with long maximal runs of equal characters.
 If the first input has length \(\N\) and \(n\) runs, and the second input has length \(\M\) and \(m\) runs, then the usual \(O(\N\M)\)-time dynamic program may be unnecessarily expensive.
 A standard idea in algorithms for comparing run-length encoded strings is to partition the dynamic-programming grid into blocks induced by pairs of runs and to compute only block-boundary values; see, 
e.g., work on RLE-LCS and RLE edit distance~\cite{freschi04longest,ahsan14longest,clifford19rle}.

\paragraph{Our Contribution.}
The contribution of this note is to observe that the minimum-cost LCS tie-breaking objective also has a direct block-boundary formulation.
The key point is that, inside a block induced by two runs, every diagonal step is either \emph{always} an equal-pair step or \emph{always} a replacement.
Hence the optimal contribution of a path segment inside a block depends only on its displacement, not on the individual grid cells visited. 
This leads to an \(O(m\N+n\M)\)-time algorithm.

\paragraph{Roadmap.}
We first show that the problem can be solved with standard means for plain LCS computation using \(O(\N\M)\) time (\cref{thm:quadratic}) and \(O(\min(\N,\M))\) space (\cref{lem:hirschberg}).
Subsequently, we study the problem in RLE-compressed space and give an \(O(m\N+n\M)\)-time algorithm (\cref{thm:main}) using \(O(M+N)\) space (\cref{thm:output-alignment}).
Finally, by switching from a cell-based DP table to a block-based DP table representing cells by piecewise-linear functions, we can improve the space-bound further to \(O(nm)\) by \cref{thm:output-alignment-rlespace}.

\section{Preliminaries}
We work in the word-RAM model with word size \(\Omega(\log(\N+\M))\). 
Working space is measured in machine words, excluding the input representation and excluding the output alignment. 

Let \(X,Y\in\Sigma^*\) be two strings over an alphabet \(\Sigma\), and let
\(
	\N:=|X| \text{~and~} \M:=|Y|
\)
denote their lengths.
For integers \(i,j\), we write \([i..j]=\{i,i+1,\ldots,j\}\) if \(i\le j\), and \([i..j]=\emptyset\) otherwise.

An \emph{alignment path} of \(X\) and \(Y\) is a monotone path \(P\) from \((0,0)\) to \((\N,\M)\) in the dynamic-programming grid (cf.\ the table in the introduction). 
Here \emph{monotone} means that every step is one of
\(
(1,0),\ (0,1),\text{~or~}(1,1).
\)
A horizontal step \((1,0)\) is a deletion, a vertical step \((0,1)\) is an insertion, and a diagonal step \((1,1)\) is either an equal-pair step or a replacement depending on whether the corresponding characters of \(X\) and \(Y\) are equal. Let
\(
\equ(P),\ \ins(P),\ \del(P),\text{~and~}\rep(P)
\)
denote the numbers of equal-pair steps, insertions, deletions, and replacements in \(P\), respectively.
We call \(P\) a \emph{minimum-cost LCS alignment} if it lexicographically maximizes
\[
	\bigl(\equ(P),-(\ins(P)+\del(P))\bigr).
\]
To put in words, \(P\) first maximizes the number of equal pairs and then, among all such alignments, minimizes the number of insertions and deletions.
This is equivalent to minimizing the alignment length after the number of equal pairs has been fixed. 
Indeed, since
\(
	\N+\M = 2\equ(P)+2\rep(P)+\ins(P)+\del(P),
\)
the number of columns in the alignment~\(\ell(P)\) is 
\[
	\ell(P)
	=\equ(P)+\rep(P)+\ins(P)+\del(P)
	=\frac{\N+\M+\ins(P)+\del(P)}{2}.
\]
Thus, for fixed \(X\), \(Y\), minimizing \(\ell(P)\) is the same as minimizing \(\ins(P)+\del(P)\).
We are now ready to state our problem formally.

\problembox{%
\textsc{MinCostLCS}\\
\textbf{Input:} Two strings \(X,Y\in\Sigma^*\).\\
\textbf{Output:} A minimum-cost LCS alignment of \(X\) and \(Y\).
}

For the dynamic program it is convenient to replace the lexicographic objective by a single integer score. Set the weight
\(
	w:=\N+\M+1.
\)
For an alignment path \(P\), define the path scoring function
\(
	s(P):=w \cdot \equ(P)-\ins(P)-\del(P).
\)
The crucial link between $s$ and our goal is given by the following claim.

\begin{lemma}\label{lem:scalar}
An alignment path maximizes \(s\) if and only if it is a minimum-cost LCS alignment.
\end{lemma}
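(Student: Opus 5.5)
The plan is to show that the scalar score \(s\) induces exactly the lexicographic order on the pairs \(\bigl(\equ(P),-(\ins(P)+\del(P))\bigr)\). Once that holds, maximizing \(s\) over the finite set of alignment paths is the same as lexicographically maximizing that pair, which is the definition of a minimum-cost LCS alignment. The key quantitative fact is the range of the gap count: for every alignment path \(P\), each insertion or deletion step consumes one character of \(Y\) or \(X\), respectively. Hence
\[
0\le \ins(P)+\del(P)\le \N+\M < w .
\]

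First I will prove the central comparison claim. For any two alignment paths \(P,Q\) with \(\equ(P)>\equ(Q)\), we have \(s(P)>s(Q)\). Write \(g(P):=\ins(P)+\del(P)\). Then
\[
s(P)-s(Q)=w\bigl(\equ(P)-\equ(Q)\bigr)-\bigl(g(P)-g(Q)\bigr)\ge w-g(P)\ge w-(\N+\M)=1>0,
\]
using that the equality counts are integers, so their difference is at least \(1\), and that \(g(Q)\ge 0\). If instead \(\equ(P)=\equ(Q)\), then \(s(P)-s(Q)=g(Q)-g(P)\), so \(s\) orders such paths exactly by fewer gaps.

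Both directions of the lemma then follow. Let \(P\) be a minimum-cost LCS alignment and \(Q\) any path. Either \(\equ(Q)<\equ(P)\), and then \(s(Q)<s(P)\) by the claim, or the equality counts are equal and \(g(Q)\ge g(P)\), and then \(s(Q)\le s(P)\). So \(P\) maximizes \(s\). Conversely, let \(P\) maximize \(s\) and \(Q\) be any path. If \(\equ(Q)>\equ(P)\), the claim gives \(s(Q)>s(P)\), which is a contradiction, so \(P\) attains the maximum equality count. Among the paths with that count, \(s(P)\ge s(Q)\) translates to \(g(P)\le g(Q)\), so \(P\) is lexicographically optimal.

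There is no real obstacle here. The only point needing care is justifying the bound \(g(P)\le \N+\M\) strictly below \(w\); this is exactly why \(w=\N+\M+1\) was chosen. The bound follows from counting the characters consumed by each step type, or equivalently from the identity \(\N+\M=2\equ(P)+2\rep(P)+\ins(P)+\del(P)\) stated above.
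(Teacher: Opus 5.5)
Your proof is correct and follows the same route as the paper: both use \(0\le \ins(P)+\del(P)\le \N+\M<w\) to show that a larger \(\equ\) always gives a strictly larger \(s\), and that paths with equal \(\equ\) are ordered by \(s\) according to fewer gaps. The only difference is that you write out both directions of the equivalence explicitly, where the paper leaves them implicit.
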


\begin{proof}
For every alignment path \(P\), we have \(0\le \ins(P)+\del(P)\le \N+\M\). Hence, if \(\equ(P)>\equ(P')\), then
\(
	s(P)-s(P')
	\ge w-(\N+\M)
	>0.
\)
Thus, any increase in the number of equal pairs dominates any possible change in the gap term. 
Among all alignment paths $P'$ with the same number of equal pairs, 
maximizing \(s\) is exactly the same as minimizing \(\ins(P')+\del(P')\) over these $P'$. 
This is precisely the lexicographic objective.
\end{proof}

\begin{remark}\label{rem:smaller-w}
The choice \(w=\N+\M+1\) is intentionally simple; smaller values are possible: 
If two alignments differ by one equal pair, the maximum possible disadvantage in the number of gaps is at most \(2\min(\N,\M)-2\). 
Hence, any integer
\(
	w \ge 2\min(\N,\M)-1
\)
is sufficient. 
\end{remark}

Since all scores have magnitude \(O((\N+\M)^2)\), each score fits in \(O(1)\) machine words.
By plugging in the definition of \(s\) into any classic LCS computation algorithm that supports our scoring function~$s$, we immediately obtain the following claim.

\begin{theorem}\label{thm:quadratic}
\textsc{MinCostLCS} can be solved in \(O(\N\M)\) time.
\end{theorem}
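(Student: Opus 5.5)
We reduce \textsc{MinCostLCS} to maximizing the scalar score \(s\) and then use the standard quadratic-time grid dynamic program. By \cref{lem:scalar}, any alignment path maximizing \(s(P)=w\cdot\equ(P)-\ins(P)-\del(P)\) with \(w=\N+\M+1\) is a minimum-cost LCS alignment. So it suffices to compute a maximum-score path from \((0,0)\) to \((\N,\M)\) in the grid. This works because \(s\) is additive over the steps of a path. A horizontal or vertical step contributes \(-1\), a diagonal step into cell \((i,j)\) contributes \(w\) if \(X[i]=Y[j]\), and a replacement diagonal contributes \(0\).

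Next we define \(D[i][j]\) as the maximum score of a monotone path from \((0,0)\) to \((i,j)\), that is, of an alignment of \(X[1..i]\) with \(Y[1..j]\). The boundary values are \(D[0][0]=0\), \(D[i][0]=-i\) and \(D[0][j]=-j\). For \(i,j\ge 1\) the recurrence is
\[
D[i][j]=\max\bigl(D[i-1][j]-1,\ D[i][j-1]-1,\ D[i-1][j-1]+w\cdot[X[i]=Y[j]]\bigr).
\]
Correctness follows by induction on \(i+j\). Every path to \((i,j)\) ends with exactly one of the three step types, and the maximum score over paths with a fixed last step equals the optimum to the predecessor cell plus that step's weight, since the score is additive. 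This is the usual optimal-substructure argument for LCS and edit distance, so \(D[\N][\M]=\max_P s(P)\).

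The table has \((\N+1)(\M+1)\) entries, and each is filled in \(O(1)\) time. Each score has magnitude \(O((\N+\M)^2)\), so it fits in \(O(1)\) words, and all arithmetic is constant time in the word-RAM model. To output the alignment itself, we store the full table, or an argmax pointer per cell, and trace back from \((\N,\M)\) to \((0,0)\). Each traceback step picks a predecessor attaining the maximum, so the traceback takes \(O(\N+\M)\) time. Reading each step as an equal pair, replacement, insertion or deletion yields the alignment, whose score is \(D[\N][\M]\) and which is therefore optimal.

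The only point needing care is that ties in the maximum may be broken arbitrarily. Any path attaining the optimal value maximizes \(s\), and \cref{lem:scalar} makes it a valid output. The main obstacle, such as it is, lies in \cref{lem:scalar} itself: the choice of \(w\) must guarantee that no combination of gap savings can outweigh losing an equal pair. That lemma is already established, so the remaining argument is the routine DP verification above, giving \(O(\N\M)\) time overall.
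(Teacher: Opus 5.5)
Your proposal is correct and follows the paper's proof essentially step for step. Both use the same boundary initialization, the same three-way recurrence with diagonal weight \(w\) or \(0\), standard traceback, and an appeal to \cref{lem:scalar} to conclude that the recovered path is a minimum-cost LCS alignment.
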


\begin{proof}
Let \(F(\alpha,\beta)\) be the maximum scalar score of an alignment path from \((0,0)\) to \((\alpha,\beta)\), where \(\alpha\in[0..\N]\) and \(\beta\in[0..\M]\). Initialize \(F(0,0)=0\), \(F(\alpha,0)=-\alpha\) for \(\alpha\in[1..\N]\), and \(F(0,\beta)=-\beta\) for \(\beta\in[1..\M]\). For \(\alpha\in[1..\N]\) and \(\beta\in[1..\M]\), compute
\[
	F(\alpha,\beta)=\max\bigl(
		F(\alpha-1,\beta)-1,
		F(\alpha,\beta-1)-1,
		F(\alpha-1,\beta-1)+\delta_{\alpha,\beta}
	\bigr),
\]
where \(\delta_{\alpha,\beta}=w\) if \(X[\alpha]=Y[\beta]\), and \(\delta_{\alpha,\beta}=0\) otherwise. The three cases correspond to deletion, insertion, and a diagonal step. Thus \(F(\N,\M)\) is the optimum scalar score, and the desired alignment follows by the usual backtracking. By \Cref{lem:scalar}, it is a minimum-cost LCS alignment. The table has \((\N+1)(\M+1)\) entries and each entry is computed in constant time.
\end{proof}

\begin{example}
Consider again
\[
	X=\texttt{TTAACC}
	\text{~and~}
	Y=\texttt{CCAAGG}.
\]
Here \(\N=\M=6\) and \(w=13\). The scalar table computed by the dynamic program of \Cref{thm:quadratic} is
\[
\begin{array}{c|rrrrrrr}
	 & \eps & \texttt{C} & \texttt{C} & \texttt{A} & \texttt{A} & \texttt{G} & \texttt{G}\\
\hline
\eps      &  0 & -1 & -2 & -3 & -4 & -5 & -6\\
\texttt{T} & -1 &  0 & -1 & -2 & -3 & -4 & -5\\
\texttt{T} & -2 & -1 &  0 & -1 & -2 & -3 & -4\\
\texttt{A} & -3 & -2 & -1 & 13 & 12 & 11 & 10\\
\texttt{A} & -4 & -3 & -2 & 12 & 26 & 25 & 24\\
\texttt{C} & -5 &  9 & 10 & 11 & 25 & 26 & 25\\
\texttt{C} & -6 &  8 & 22 & 21 & 24 & 25 & 26
\end{array}
\]
The final value is \(26=13\cdot 2-0\). Hence the optimum has two equal-pair steps and no gaps, realizing the LCS \(\texttt{AA}\). The alternative LCS \(\texttt{CC}\) also has length two, but any alignment realizing it must pay gaps. For example, the alignment shown in the introduction has eight gaps and therefore scalar score \(13\cdot 2-8=18\).
\end{example}

Finally, we can combine \cref{thm:quadratic} with Hirschberg's divide-and-conquer technique~\cite{hirschberg75linear} to recover an optimum alignment in the same complexities.
This works because our score $s$ admits the property that the scalar score of a monotone path $P$ is equal to the sum of the scalar scores of the two subpaths obtained by splitting $P$ at any point. 
Thus, the optimal alignment can be recovered from the combination of the results of recursively split subproblems.

\begin{theorem}\label{lem:hirschberg}
Given two strings \(X\) and \(Y\), one minimum-cost LCS alignment can be
computed in \(O(\N\M)\) time and \(O(\min(\N,\M))\) working space, excluding
the space needed to write the output alignment.
\end{theorem}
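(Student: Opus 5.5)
We plan to adapt Hirschberg's divide-and-conquer scheme directly to the scalar score \(s\). By symmetry we assume \(\M\le\N\); otherwise we swap the roles of \(X\) and \(Y\), which exchanges insertions and deletions but leaves \(s\) unchanged. The first ingredient is a linear-space forward pass. The recurrence of \Cref{thm:quadratic} fills row \(\alpha\) of \(F\) using only row \(\alpha-1\). Keeping just two rows of length \(\M+1\), we obtain the last row \(F(\alpha^\ast,\cdot)\) for any prefix \(X[1..\alpha^\ast]\) in \(O(\alpha^\ast\M)\) time and \(O(\M)\) space.

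Symmetrically, a backward pass computes \(B(\alpha,\beta)\), the best scalar score of a monotone path from \((\alpha,\beta)\) to \((\N,\M)\). It is defined by the mirrored recurrence, equivalently the forward DP on the reversed strings, and also uses \(O(\M)\) space. The scalar scores are bounded by \(O((\N+\M)^2)\), so each entry fits in \(O(1)\) words.

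The key structural fact is additivity. For any path \(P\) through a grid point \(p\), splitting \(P\) at \(p\) into \(P_1\) and \(P_2\) gives \(s(P)=s(P_1)+s(P_2)\). This holds because \(\equ\), \(\ins\) and \(\del\) are counts over steps, and \(w\) is a global constant fixed once from the original \(\N,\M\), not recomputed inside subproblems.

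Now set \(\alpha^\ast=\lfloor\N/2\rfloor\). Every path from \((0,0)\) to \((\N,\M)\) crosses row \(\alpha^\ast\). Let \(\beta^\ast\) be the column of the last point of \(P\) on that row, so that the step leaving it is a horizontal or diagonal step into row \(\alpha^\ast+1\), or \(P\) ends there. Then
\[
	\max_P s(P)=\max_{\beta}\bigl(F(\alpha^\ast,\beta)+B(\alpha^\ast,\beta)\bigr),
\]
and any maximizing \(\beta^\ast\) yields an optimal path through \((\alpha^\ast,\beta^\ast)\). We then recurse on the two subproblems \(X[1..\alpha^\ast]\) versus \(Y[1..\beta^\ast]\) and \(X[\alpha^\ast+1..\N]\) versus \(Y[\beta^\ast+1..\M]\), always using the same weight \(w\). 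Concatenating the two optimal subpaths gives a path whose score equals the optimum, by additivity. By \Cref{lem:scalar}, this path is a minimum-cost LCS alignment. The base cases are \(\N\le 1\) or \(\M=0\), which are solved directly in time linear in the subproblem size.

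For the complexity, the top level costs \(c\,\N\M\). The two children have sizes \(\alpha^\ast\beta^\ast\) and \((\N-\alpha^\ast)(\M-\beta^\ast)\), whose sum is at most \(\tfrac{\N}{2}\M\) up to rounding. Hence the total work forms a geometric series bounded by \(2c\,\N\M=O(\N\M)\). For space, the rows are freed before recursing, and the recursion stack stores only \(O(\log \N)\) pairs of index ranges. The output alignment is written incrementally and excluded from the bound. This gives \(O(\M)=O(\min(\N,\M))\) working space.

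The main obstacle is ensuring that the split is exact, namely that gluing optimal subpaths at \((\alpha^\ast,\beta^\ast)\) loses nothing and that no path is double-counted or excluded. A path may cross row \(\alpha^\ast\) along several consecutive points. This is harmless, since the argument only needs some point of \(P\) on that row, and every such point yields a valid decomposition. The subtler point is the constant \(w\). Had we recomputed \(w\) from the subproblem lengths, the subproblem scores would be incomparable with the global one and additivity would break. We therefore explicitly pass the global \(w\) into every recursive call, and we will argue correctness via the decomposition identity above rather than via \Cref{lem:scalar} applied to subproblems.
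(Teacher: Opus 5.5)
Your proposal is correct and follows essentially the same route as the paper: forward and backward score rows at the middle row of \(X\), a maximizing crossing column justified by additivity of \(s\) under path concatenation (with the global \(w\) kept fixed in every subcall), and recursion whose total area forms a geometric series. Assuming \(\M\le\N\) and halving \(X\), so that the stored rows run along the shorter string, is exactly what gives the \(O(\min(\N,\M))\) bound; it is the longer string that gets split, which is how the paper's closing remark about ``splitting the shorter string'' has to be read.
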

\begin{proof}
We show the case in which the recursion splits \(X\); 
the other case is symmetric. 
Let \(h=\lfloor \N/2\rfloor\). 
Compute, using the linear-space dynamic program, the optimum scores
\(
\textsf{fwd}[\beta]
\)
for aligning \(X[1..h]\) with \(Y[1..\beta]\), for all \(\beta\in[0..\M]\).
Similarly, compute backward scores
\(
\textsf{bwd}[\beta]
\)
for aligning \(X[h+1..\N]\) with \(Y[\beta+1..\M]\), 
for all \(\beta\in[0..\M]\). 
The backward scores are obtained by applying the same dynamic program to the reversed substrings.

Since the score
\(
	s(P)=w \cdot \equ(P)-\ins(P)-\del(P)
\)
is additive over concatenation of alignment paths, any alignment path from
\((0,0)\) to \((\N,\M)\) that crosses the row \(h\) at column \(\beta\) has
score equal to the score of its prefix plus the score of its suffix. 
Hence, an optimal crossing column is obtained by choosing
\[
	\beta^*\in
	\argmax_{\beta\in[0..\M]}
	\bigl(\textsf{fwd}[\beta]+\textsf{bwd}[\beta]\bigr).
\]
There exists an optimal alignment crossing row \(h\) at \(\beta^*\), and it is
obtained by concatenating an optimum alignment of
\(X[1..h]\) with \(Y[1..\beta^*]\) and an optimum alignment of
\(X[h+1..\N]\) with \(Y[\beta^*+1..\M]\).

We recurse on these two subproblems. Each level of the recursion spends
linear-space DP time proportional to the total area of the subproblems at that
level, and these areas form a geometric series bounded by \(O(\N\M)\). The
working space is the space for two score rows, plus recursion overhead, namely
\(O(\M)\) when splitting \(X\). Splitting the shorter string gives
\(O(\min(\N,\M))\) working space.
\end{proof}

\section{Run-Length Encoded Input}

We now assume that the input strings are given in run-length encoded form. We write
\[
	\rle(X)=b_1^{x_1}b_2^{x_2}\cdots b_n^{x_n}
	\text{~and~}
	\rle(Y)=c_1^{y_1}c_2^{y_2}\cdots c_m^{y_m},
\]
where \(b_i,c_j\in\Sigma\), \(x_i,y_j\ge 1\), \(b_i\ne b_{i+1}\), and \(c_j\ne c_{j+1}\). Thus
\(
	\N=\sum_{i=1}^n x_i
	\text{~and~}
	\M=\sum_{j=1}^m y_j.
\)
The substring $b_i^{x_i}$ of $X$ is named \emph{Run}~$i$ of $X$ (analogously for $Y$).
Finally, we define a variant of our initial problem that deviates only in the input format.

\problembox{%
\textsc{MinCostLCS-2RLE}\\
\textbf{Input:} Two run-length encoded strings $X$ and $Y$.
\\
\textbf{Output:} A minimum-cost LCS alignment of the decoded strings \(\rle(X)\) and \(\rle(Y)\).
}

To tackle this problem, we follow the footsteps of Freschi and Bogliolo~\cite{freschi04longest}, 
whose idea is to index the run blocks and compute the dynamic programming table only at the block boundaries.
As we will see, applying the same trick for our extended scoring function is nontrivial but feasible because of its homogeneous properties.

\subsection{Run Blocks and Boundary Values}

From this point on we use the usual table orientation of dynamic programming tables: 
a \emph{point} \((\alpha,\beta)\) represents the prefixes \(X[1..\alpha]\) and \(Y[1..\beta]\), the first coordinate increases downward, and the second coordinate increases to the right. 
For the RLE input strings $X$ and $Y$, define the prefix sums of their exponents 
\[
	p_i:=\sum_{h=1}^{i-1}x_h \text{~for every~}i\in[1..n+1]
	\text{~and~}
	q_j:=\sum_{h=1}^{j-1}y_h \text{~for every~}j\in[1..m+1].
\]
Thus \(p_1=q_1=0\), \(p_{n+1}=\N\), and \(q_{m+1}=\M\). Run \(i\) of \(X\) occupies the character positions \([p_i+1..p_{i+1}]\), and run \(j\) of \(Y\) occupies the character positions \([q_j+1..q_{j+1}]\).

The \emph{run block} induced by runs \(i\) and \(j\) is the point rectangle
\(
	\boxij := [p_i..p_{i+1}]\times[q_j..q_{j+1}]
\)
in the DP grid. It has height \(x_i\) and width \(y_j\). Its unit cells are indexed by
\[
	[p_i+1..p_{i+1}]\times[q_j+1..q_{j+1}],
\]
and every such cell compares the same two characters \(b_i\) and \(c_j\). The \emph{input boundary} of \(\boxij\) is its top and left side, and the \emph{output boundary} is its bottom and right side.

\begin{figure}[t]
\centering
\begin{tikzpicture}[
	cell/.style={minimum width=7mm,minimum height=7mm,draw,inner sep=0pt},
	boundary/.style={line width=.8pt},
	blockfill/.style={fill=blue!10},
	blockline/.style={draw=blue!70!black,line width=1.2pt},
	auxfill/.style={fill=orange!12},
	auxline/.style={draw=orange!70!black,line width=1pt},
	arrow/.style={-{Stealth[length=2mm]},thick},
]
\matrix[matrix of nodes,nodes=cell,row sep=-\pgflinewidth,column sep=-\pgflinewidth] (m) {
	{} & {} & {} & {} & {} & {} & {} & {} & {} & {} \\
	{} & {} & {} & {} & {} & {} & {} & {} & {} & {} \\
	{} & {} & {} & {} & {} & {} & {} & {} & {} & {} \\
	{} & {} & {} & {} & {} & {} & {} & {} & {} & {} \\
	{} & {} & {} & {} & {} & {} & {} & {} & {} & {} \\
	{} & {} & {} & {} & {} & {} & {} & {} & {} & {} \\
};
\coordinate (tbll) at ([yshift=1mm]m-2-3.north west);
\coordinate (tbur) at ([yshift=6mm]m-2-5.north east);
\coordinate (lbll) at ([xshift=-6mm]m-4-3.south west);
\coordinate (lbur) at ([xshift=-1mm]m-2-3.north west);
\begin{scope}[on background layer]
	\node[blockfill,fit=(m-2-3)(m-4-5),inner sep=0pt] {};
	\node[auxfill,fit=(tbll)(tbur),inner sep=0pt] {};
	\node[auxfill,fit=(lbll)(lbur),inner sep=0pt] {};
\end{scope}
\node[blockline,fit=(m-2-3)(m-4-5),inner sep=0pt] (blk) {};
\node[auxline,fit=(tbll)(tbur),inner sep=0pt] (barr) {};
\node[auxline,fit=(lbll)(lbur),inner sep=0pt] (larr) {};
\draw[boundary] (m-1-1.north west) rectangle (m-6-10.south east);
\draw[arrow] ([xshift=2mm]m-2-5.north east) -- ([xshift=2mm]m-4-5.south east) node[midway,right, fill=white] {right output};
\node at (blk.center) {\(\boxij\)};
\node at (barr.center) {\(T\)};
\node at (larr.center) {\(L\)};
\node[left=6mm of m-2-7.east,fill=white,inner sep=0pt] {\(p_i\)};
\node[left=3mm of m-4-7.east,fill=white,inner sep=0pt] {\(p_{i+1}\)};
\node[above=3mm of m-6-3.north,fill=white,inner sep=0pt] {\(q_j\)};
\node[above=3mm of m-6-5.north,fill=white,inner sep=0pt] {\(q_{j+1}\)};
\draw[arrow] ([yshift=-5mm]m-4-3.south west) -- ([yshift=-5mm]m-4-5.south east) node[midway,below,fill=white,inner sep=0pt] {bottom output};
\end{tikzpicture}
\caption{A run block \(\boxij\) as a point rectangle in the ordinary uncompressed DP table, together with the temporary arrays \(T\) and \(L\) storing the top-boundary and left-boundary values. The block-boundary algorithm computes the bottom and right boundary values from these input-boundary values, without materializing all internal table values.}
\label{fig:block-boundary}
\end{figure}

For a local path segment inside \(\boxij\) with displacement \((u,v)\), where \(u\in[0..x_i]\) and \(v\in[0..y_j]\), define the block-transfer score
	\begin{equation}\label{eq:def:tau}
		\tau_{i,j}(u,v)=
		\begin{cases}
			w\cdot\min(u,v)-|u-v| & \text{if~} b_i=c_j,\\[1mm]
			-|u-v| & \text{if~} b_i\ne c_j.
		\end{cases}
	\end{equation}

\begin{lemma}\label{lem:block-transfer}
For any two points \((\alpha,\beta),(\alpha+u,\beta+v)\in\boxij\), the maximum score of a monotone path from \((\alpha,\beta)\) to \((\alpha+u,\beta+v)\) that stays inside \(\boxij\) is \(\tau_{i,j}(u,v)\).
\end{lemma}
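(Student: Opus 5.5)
Fix a monotone path \(P\) from \((\alpha,\beta)\) to \((\alpha+u,\beta+v)\) inside \(\boxij\), and let \(d\), \(h\), \(g\) be its numbers of diagonal, deletion \((1,0)\) and insertion \((0,1)\) steps. The plan is to bound \(s(P)\) from above in terms of \(d\) alone, and then exhibit a path attaining the bound. Everything rests on the observation that every unit cell of \(\boxij\) compares \(b_i\) with \(c_j\). Hence every diagonal step of \(P\) is an equal-pair step if \(b_i=c_j\), and a replacement otherwise.

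First I record the displacement identities \(u=d+h\) and \(v=d+g\). These give \(d\le\min(u,v)\), and
\[
	h+g=u+v-2d=|u-v|+2\bigl(\min(u,v)-d\bigr).
\]
The score of \(P\) is \(s(P)=w\cdot\equ(P)-(h+g)\), where \(\equ(P)=d\) if \(b_i=c_j\) and \(\equ(P)=0\) otherwise.

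Next I split into the two cases of \eqref{eq:def:tau}. If \(b_i\ne c_j\), then \(s(P)=-(h+g)\le-|u-v|\), with equality exactly when \(d=\min(u,v)\). If \(b_i=c_j\), then
\[
	s(P)=w d-|u-v|-2\bigl(\min(u,v)-d\bigr).
\]
This expression is increasing in \(d\), since its coefficient of \(d\) is \(w+2>0\). So it is maximized at \(d=\min(u,v)\), giving \(w\cdot\min(u,v)-|u-v|\). Both cases therefore give \(s(P)\le\tau_{i,j}(u,v)\).

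For attainment, I take \(\min(u,v)\) diagonal steps followed by \(|u-v|\) steps in the excess direction (horizontal if \(u>v\), vertical otherwise). This path is monotone and ends at \((\alpha+u,\beta+v)\). It stays inside \(\boxij\) because \(\boxij\) is a rectangle and both endpoints lie in it: each coordinate moves monotonically between its start and end values. The path uses exactly \(d=\min(u,v)\) diagonals, so it meets the bound with equality.

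There is no real obstacle here. The step that needs care is the homogeneity observation, namely that each diagonal step inside the block corresponds to a cell in \([p_i+1..p_{i+1}]\times[q_j+1..q_{j+1}]\) and thus compares \(b_i\) with \(c_j\). The containment of the constructed path in the rectangle also deserves a line, but it is immediate.
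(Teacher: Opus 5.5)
Your proof is correct and takes the same approach as the paper: homogeneity of the block, at most \(\min(u,v)\) diagonal steps, and using the maximum number of diagonals is optimal. Your identity \(h+g=|u-v|+2(\min(u,v)-d)\) and your explicit attainment and containment argument also make rigorous the step the paper only asserts, namely that a path with fewer diagonal steps must pay more than \(|u-v|\) gap steps.
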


\begin{proof}
Any monotone path from \((\alpha,\beta)\) to \((\alpha+u,\beta+v)\) uses at most \(\min(u,v)\) diagonal steps. All remaining displacement must be realized by \(|u-v|\) horizontal or vertical steps. Since horizontal and vertical steps each contribute \(-1\), and diagonal steps contribute either \(w\) if \(b_i=c_j\) or \(0\) if \(b_i\ne c_j\), the best path uses exactly \(\min(u,v)\) diagonal steps. This gives the stated formula.
\end{proof}

Let \(f(\alpha,\beta)\) be the maximum scalar score of an alignment path from \((0,0)\) to \((\alpha,\beta)\). The standard cell-by-cell dynamic program would compute \(f\) at all \((\N+1)(\M+1)\) grid points. We compute \(f\) only on the horizontal and vertical run boundaries, namely all points \((\alpha,q_j)\) and \((p_i,\beta)\).

For a block \(\boxij\), if all values on the input boundary are known, the output values are given by
\begin{align}
	f(p_{i+1},\beta)
	=\max\Bigl(
		&\max_{\beta'\in[q_j..\beta]}
		\bigl(f(p_i,\beta')+\tau_{i,j}(x_i,\beta-\beta')\bigr),
		\notag\\[-1mm]
	&
		\max_{\alpha'\in[p_i..p_{i+1}]}
		\bigl(f(\alpha',q_j)+\tau_{i,j}(p_{i+1}-\alpha',\beta-q_j)\bigr)
	\Bigr)
	\label{eq:bottom}
\end{align}
for all \(\beta\in[q_j..q_{j+1}]\), and
\begin{align}
	f(\alpha,q_{j+1})
	=\max\Bigl(
		&\max_{\beta'\in[q_j..q_{j+1}]}
		\bigl(f(p_i,\beta')+\tau_{i,j}(\alpha-p_i,q_{j+1}-\beta')\bigr),
		\notag\\[-1mm]
	&
		\max_{\alpha'\in[p_i..\alpha]}
		\bigl(f(\alpha',q_j)+\tau_{i,j}(\alpha-\alpha',y_j)\bigr)
	\Bigr)
	\label{eq:right}
\end{align}
for all \(\alpha\in[p_i..p_{i+1}]\). 
These equations try every possible entry point on the input boundary and then use \Cref{lem:block-transfer} for the best internal path from that entry point to the output point.

\subsection{Time Complexity: Processing One Block in Linear Time}\label{sec:time:oneblock}

It remains to show that \Cref{eq:bottom,eq:right} can be evaluated in time linear in the perimeter of the block. 
We state the argument for the bottom side; 
the right side is symmetric. See \cref{fig:block-boundary} for a sketch.

Consider first the contribution from the top boundary to a point \((p_{i+1},q_j+\beta)\), where \(\beta\in[0..y_j]\). 
Create the shorthand \(T[t] := f(p_i,q_j+t)\) for every \(t\in[0..y_j]\), defining the top-boundary values. 
The contribution from the top boundary is
\begin{equation}\label{eq:top-contribution}
	A[\beta]
	:=
	\max_{t\in[0..\beta]}
	\bigl(T[t]+\tau_{i,j}(x_i,\beta-t)\bigr).
\end{equation}
If we evaluate this expression naively for every \(\beta\), then the
computation takes
\(
	\sum_{\beta=0}^{y_j} O(\beta)=O(y_j^2)
\)
time for this contribution alone. We now show how to accelerate the computation of $A$.

\begin{lemma}\label{lem:boundary-acceleration}
	We can evaluate all values \(A[\beta]\), for \(\beta\in[0..y_j]\), in \(O(y_j)\) time.
\end{lemma}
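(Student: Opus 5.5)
We split according to the sign of \(\beta-t\) relative to \(x_i\): for \(d:=\beta-t\in[0..\beta]\), we have \(\min(x_i,d)\) and \(|x_i-d|\) piecewise linear in \(d\) with a single breakpoint at \(d=x_i\). We therefore split the range of \(t\) in \eqref{eq:top-contribution} into the \emph{near} part \(t\in[\max(0,\beta-x_i)..\beta]\), where \(d\le x_i\), and the \emph{far} part \(t\in[0..\beta-x_i-1]\), where \(d>x_i\).

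In the near part, \(\tau_{i,j}(x_i,d)=c_1 d - x_i\), where \(c_1=w+1\) if \(b_i=c_j\) and \(c_1=1\) otherwise. Indeed, if \(b_i=c_j\) then \(wd-(x_i-d)=(w+1)d-x_i\), and otherwise \(-(x_i-d)=d-x_i\). Hence the near term is
\[
c_1\beta-x_i+\max_{t\in[\max(0,\beta-x_i)..\beta]}\bigl(T[t]-c_1t\bigr).
\]
This is a sliding-window maximum of the fixed sequence \(T[t]-c_1t\). The window endpoints are both nondecreasing in \(\beta\), so a standard monotone deque computes all window maxima in \(O(y_j)\) total time.

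In the far part, \(\tau_{i,j}(x_i,d)=c_0 - d\) for a constant \(c_0\): namely \(c_0=wx_i+x_i\) if \(b_i=c_j\), and \(c_0=x_i\) otherwise. Hence the far term is
\[
c_0-\beta+\max_{t\in[0..\beta-x_i-1]}\bigl(T[t]+t\bigr),
\]
which is a prefix maximum. We maintain it incrementally as \(\beta\) grows, in \(O(1)\) per step. For \(\beta\le x_i\) the far part is empty and contributes \(-\infty\).

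Finally, \(A[\beta]\) is the maximum of the two terms. Each \(\beta\) then costs amortized \(O(1)\), giving \(O(y_j)\) overall. The main obstacle is justifying the exact case split at the breakpoint \(d=x_i\), including the boundary case \(d=x_i\), which both formulas give correctly so either side may include it. The second point to check is that the near window is genuinely monotone, so that the deque argument applies. An alternative that avoids the deque is to note that, because of the \(-1\) per gap, the near part can also be computed by the recurrence \(A'[\beta]=\max(A'[\beta-1]+c_1,\ldots)\); however, the window constraint \(d\le x_i\) makes that recurrence awkward, so we would use the deque.
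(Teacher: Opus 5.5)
Your proposal is correct and takes essentially the same route as the paper: it splits the range of \(t\) at the breakpoint \(t=\beta-x_i\). It then handles the far part by a prefix maximum of \(T[t]+t\), and the near part by a monotone-deque sliding-window maximum of \(T[t]-(w+1)t\) or \(T[t]-t\) over the window \([\max(0,\beta-x_i)..\beta]\). The differences are cosmetic: you unify the equal and unequal cases through the constants \(c_0,c_1\) and assign the shared breakpoint to the near side only, whereas the paper treats the two cases separately and lets both ranges contain \(t=\beta-x_i\).
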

\begin{proof} Let \(d=\beta-x_i\). We have two cases to consider.

\noindent\textbf{Case 1: \(b_i\ne c_j\).~} 
Then
\(
	A[\beta]
	=
	\max_{t\in[0..\beta]}
	\bigl(T[t]-|t-d|\bigr).
\)
Since \(|x_i-\beta+t|=|t-d|\), we have
\[
	T[t]-|t-d|
	=
	\begin{cases}
		T[t]+t-d, & t\in[0..\min(\beta,d)],\\[1mm]
		T[t]-t+d, & t\in[\max(0,d)..\beta].
	\end{cases}
\]
Hence, with the convention that the maximum over an empty interval is
\(-\infty\),
\begin{equation}\label{eq:top-contribution-nonequal}
	A[\beta]
	=
	\max\left(
		\max_{t\in[0..\min(\beta,d)]}(T[t]+t)-d,\;
		\max_{t\in[\max(0,d)..\beta]}(T[t]-t)+d
	\right).
\end{equation}
The first maximum is answered by prefix maxima of \(t \to T[t]+t\).
The second maximum is a sliding-window maximum of \(t \to T[t]-t\),
where the window is \([\max(0,\beta-x_i)..\beta]\). Therefore all values
\(A[\beta]\) can be computed in total \(O(y_j)\) time.

\noindent\textbf{Case 2: \(b_i = c_j\).~} 
Then
\(
	A[\beta]
	=
	\max_{t\in[0..\beta]}
	\bigl(T[t]+w\cdot\min(x_i,\beta-t)-|t-d|\bigr).
\)
Splitting according to whether \(t\le d\), gives
\[
	T[t]+w\cdot\min(x_i,\beta-t)-|t-d|
	=
	\begin{cases}
		T[t]+t+(w+1)x_i-\beta,
			& t\in[0..\min(\beta,d)],\\[1mm]
		T[t]-(w+1)t+(w+1)\beta-x_i,
			& t\in[\max(0,d)..\beta].
	\end{cases}
\]
Therefore
\begin{equation}\label{eq:top-contribution-equal}
	A[\beta]
	=
	\max\left(
		\max_{t\in[0..\min(\beta,d)]}(T[t]+t)+(w+1)x_i-\beta,\;
		\max_{t\in[\max(0,d)..\beta]}(T[t]-(w+1)t)+(w+1)\beta-x_i
	\right).
\end{equation}
The first maximum is again answered by the same prefix maxima.
The second maximum is a sliding-window maximum of
\(T[t]-(w+1)t\). Thus the equal-character case also costs \(O(y_j)\) time
for all bottom-boundary values.
\end{proof}

The contribution from the left side to the bottom side is handled in the same way, using the array \(L[r]=f(p_i+r,q_j)\) for every \(r\in[0..x_i]\), defining the left-boundary values. 
It costs \(O(x_i+y_j)\) time to compute all bottom-side contributions. The right side is symmetric.

\begin{lemma}\label{lem:block-linear}
Given all dynamic-programming values on the top and left side of \(\boxij\), all values on the bottom and right side of \(\boxij\) can be computed in \(O(x_i+y_j)\) time.
\end{lemma}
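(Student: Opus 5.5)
The plan is to split each output formula, \Cref{eq:bottom} and \Cref{eq:right}, into its two inner maxima: a top-boundary contribution and a left-boundary contribution. Each of these four pieces is computed separately in time linear in the relevant side lengths. The output value at each point is then the pointwise maximum of two arrays, which costs \(O(x_i+y_j)\) overall.

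The top-to-bottom contribution is exactly \(A[\beta]\) from \Cref{eq:top-contribution}, which \Cref{lem:boundary-acceleration} evaluates in \(O(y_j)\) time.

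For the left-to-bottom contribution, \(B[\beta]:=\max_{r\in[0..x_i]}\bigl(L[r]+\tau_{i,j}(x_i-r,\beta)\bigr)\), the range of \(r\) is fixed rather than depending on \(\beta\).
\begin{itemize}
\item In the non-equal case, \(\tau=-|x_i-r-\beta|\). Split at \(r\le x_i-\beta\) versus \(r\ge x_i-\beta\). This gives \(\max_{r\le x_i-\beta}(L[r]+r)+\beta-x_i\) and \(\max_{r\ge x_i-\beta}(L[r]-r)+x_i-\beta\), clamped to \([0..x_i]\).
\item In the equal case, \(\min(x_i-r,\beta)\) switches at the same threshold. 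The two pieces become \(L[r]+r\) with coefficient \(w\beta\) added, and \(L[r]-(w+1)r\) with an additive term in \(\beta\).
\end{itemize}
Since the threshold \(x_i-\beta\) is monotone in \(\beta\), both pieces are a prefix maximum and a suffix maximum of fixed arrays over \(r\). After \(O(x_i)\) preprocessing, each query takes \(O(1)\), so all \(B[\beta]\) take \(O(x_i+y_j)\) time in total.

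The right side is handled by symmetry, swapping the roles of \(X\) and \(Y\) (rows and columns), since \(\tau_{i,j}\) is symmetric in \(u,v\) (both \(\min(u,v)\) and \(|u-v|\) are symmetric).
\begin{itemize}
\item The left-to-right contribution \(\max_{\alpha'\le\alpha}\bigl(L[\alpha'-p_i]+\tau(\alpha-\alpha',y_j)\bigr)\) is the transposed analogue of \(A\). Applying \Cref{lem:boundary-acceleration} to the transposed block gives it in \(O(x_i)\) time.
\item The top-to-right contribution is the transposed analogue of \(B\), costing \(O(x_i+y_j)\) time.
\end{itemize}

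Correctness follows from \Cref{eq:bottom,eq:right}, which rest on \Cref{lem:block-transfer}. The main point to verify carefully is the case splits at the breakpoints. Adjacent pieces must agree on the boundary index, where both branches give the same value. Empty ranges must be treated as \(-\infty\). The prefix and suffix indices must stay within \([0..x_i]\) after clamping, for \(\beta>x_i\) as well as \(\beta\le x_i\). Once these checks hold, the total cost is \(O(x_i)+O(y_j)\) over the four contributions plus the final pointwise maxima, which proves the claim.
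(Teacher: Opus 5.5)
Your proposal is correct and follows essentially the paper's route: the top-to-bottom term via \Cref{lem:boundary-acceleration}, the left-to-bottom term via prefix and suffix maxima over \(r\) split at \(r\le x_i-\beta\) versus \(r\ge x_i-\beta\) (exactly as the paper spells out later in the proof of \Cref{lem:compressed-block-transfer}), and the right side by the symmetry of \(\tau_{i,j}\). One small slip that does not affect the argument: in the equal-character case, the additive term on the \(L[r]+r\) branch is \((w+1)\beta-x_i\), not just \(w\beta\).
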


\subsection{Full Algorithm}

The pseudocode in \cref{alg:main} makes the block-boundary dynamic program explicit. 
The subroutine \textsc{LinearBlockUpdate} in \cref{alg:block} evaluates \Cref{eq:bottom,eq:right} for one block using the prefix, suffix, and sliding maxima described in \Cref{lem:block-linear}.

\begin{algorithm}[t]
\caption{Block-boundary algorithm for \textsc{MinCostLCS-2RLE}}
\label{alg:main}
\begin{algorithmic}[1]
\Require RLE strings \(b_1^{x_1}\cdots b_n^{x_n}\) and \(c_1^{y_1}\cdots c_m^{y_m}\)
\Ensure The optimum scalar score \(f(\N,\M)\)
\State \(\N\gets \sum_{i=1}^n x_i\), \(\M\gets \sum_{j=1}^m y_j\), and \(w\gets \N+\M+1\)
\State \(p_1\gets 0\); \For{\(i\gets 1\) \textbf{to} \(n\)} \State \(p_{i+1}\gets p_i+x_i\) \EndFor
\State \(q_1\gets 0\); \For{\(j\gets 1\) \textbf{to} \(m\)} \State \(q_{j+1}\gets q_j+y_j\) \EndFor
\State Set all boundary values \(f(\alpha,q_j)\) and \(f(p_i,\beta)\) to \(-\infty\)
\For{\(\alpha\gets 0\) \textbf{to} \(\N\)}
	\State \(f(\alpha,0)\gets -\alpha\)
\EndFor
\For{\(\beta\gets 0\) \textbf{to} \(\M\)}
	\State \(f(0,\beta)\gets -\beta\)
\EndFor
\For{\(i\gets 1\) \textbf{to} \(n\)}
	\For{\(j\gets 1\) \textbf{to} \(m\)}
		\State \Call{LinearBlockUpdate}{\(i,j\)}
	\EndFor
\EndFor
\State \Return \(f(\N,\M)\)
\end{algorithmic}
\end{algorithm}

\begin{algorithm}[t]
\caption{Linear update of one block \(\boxij\)}
\label{alg:block}
\begin{algorithmic}[1]
\Procedure{LinearBlockUpdate}{\(i,j\)}
\State Copy the top boundary \(T[t]\gets f(p_i,q_j+t)\) for \(t\in[0..y_j]\)
\State Copy the left boundary \(L[r]\gets f(p_i+r,q_j)\) for \(r\in[0..x_i]\)
\For{\(\beta\gets 0\) \textbf{to} \(y_j\)}
	\State Compute with the computation trick of \cref{lem:boundary-acceleration}
	\begin{equation}\label{eq:def:B}
		B[\beta]\gets
		\max\left(
		\max_{t\in[0..\beta]}\bigl(T[t]+\tau_{i,j}(x_i,\beta-t)\bigr),
		\max_{r\in[0..x_i]}\bigl(L[r]+\tau_{i,j}(x_i-r,\beta)\bigr)
		\right)
	\end{equation}
\EndFor
\For{\(\alpha\gets 0\) \textbf{to} \(x_i\)}
	\State Compute with the computation trick of \cref{lem:boundary-acceleration}
	\begin{equation}\label{eq:def:R}
		R[\alpha]\gets
		\max\left(
		\max_{t\in[0..y_j]}\bigl(T[t]+\tau_{i,j}(\alpha,y_j-t)\bigr),
		\max_{r\in[0..\alpha]}\bigl(L[r]+\tau_{i,j}(\alpha-r,y_j)\bigr)
		\right)
	\end{equation}
\EndFor
\For{\(\beta\gets 0\) \textbf{to} \(y_j\)}
	\State \(f(p_{i+1},q_j+\beta)\gets \max\bigl(f(p_{i+1},q_j+\beta),B[\beta]\bigr)\)
\EndFor
\For{\(\alpha\gets 0\) \textbf{to} \(x_i\)}
	\State \(f(p_i+\alpha,q_{j+1})\gets \max\bigl(f(p_i+\alpha,q_{j+1}),R[\alpha]\bigr)\)
\EndFor
\EndProcedure
\end{algorithmic}
\end{algorithm}

\begin{theorem}\label{thm:main}
\textsc{MinCostLCS-2RLE} can be solved in \(O(m\N+n\M)\) time.
\end{theorem}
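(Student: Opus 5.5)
We prove \cref{thm:main} in two parts: correctness of \cref{alg:main} and its running time. The optimal alignment itself is recovered later by \cref{thm:output-alignment}.

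\textbf{Correctness.} We show by induction over the blocks in row-major order $(i,j)$ that, when \textsc{LinearBlockUpdate}$(i,j)$ is called, all values on the top side ($f(p_i,\cdot)$ on $[q_j..q_{j+1}]$) and the left side ($f(\cdot,q_j)$ on $[p_i..p_{i+1}]$) of $\boxij$ already equal the true optimum $f$.

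\emph{Base case.} For $i=1$ or $j=1$ these values are the initialized borders $f(\alpha,0)=-\alpha$ and $f(0,\beta)=-\beta$. They are correct because a path along the border uses only gaps.

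\emph{Inductive step.} Otherwise, the top side of $\boxij$ is the bottom side of $\square_{i-1,j}$ and the left side is the right side of $\square_{i,j-1}$. Both blocks were processed earlier.

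The key structural fact is a decomposition of any optimal path $P$ from $(0,0)$ to a point on the output boundary of $\boxij$. Take the last point at which $P$ lies on the input boundary of $\boxij$. Because $P$ is monotone, the remainder of $P$ stays inside $\boxij$. The score is additive, so $s(P)$ is at most $f(\text{entry point})$ plus the optimal internal score. By \cref{lem:block-transfer}, the internal score is $\tau_{i,j}$ of the displacement. This proves ``$\le$'' in \Cref{eq:bottom,eq:right}. The reverse inequality ``$\ge$'' holds because concatenating any optimal prefix path with an optimal internal path is a valid path.

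One subtlety concerns corner points shared by several blocks. For example, $f(p_{i+1},q_{j+1})$ receives updates from both $\boxij$ and the neighbouring blocks. The algorithm takes a max over all updates, and every update is a valid lower bound. Hence the stored value after all relevant blocks are processed equals $f$, and it is finalized before any block reads it as input. The same argument covers shared side points, whose updates come from the two blocks on either side.

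By induction the final value $f(\N,\M)$ is the optimum scalar score. By \cref{lem:scalar} this optimum identifies minimum-cost LCS alignments.

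\textbf{Running time.} By \cref{lem:block-linear}, block $(i,j)$ costs $O(x_i+y_j)$. Summing over all blocks gives
\[
\sum_{i=1}^n\sum_{j=1}^m (x_i+y_j)=m\N+n\M .
\]
Initialization and the prefix sums $p_i,q_j$ take $O(n+m+\N+\M)$ time. This is dominated by the sum above, since $n,m\ge1$ whenever the input is nonempty.

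The only step needing care beyond the cited lemmas is the cut-at-the-last-boundary-point argument. We must check that a path which leaves the input boundary never re-enters it inside the block. This holds because the top and left sides are the minimal rows and columns of $\boxij$, and coordinates only increase along a monotone path.

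Storing all boundary values uses $O(m\N+n\M)$ words. Space is not claimed in this theorem, so this suffices here.
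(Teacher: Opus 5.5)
Your proposal is correct and is essentially the paper's proof: induction over the blocks via the last-entry decomposition and \cref{lem:block-transfer}, then summing the $O(x_i+y_j)$ bound of \cref{lem:block-linear}; your discussion of shared corner points makes explicit what the paper leaves implicit. The only slip is in the base case: for $i=1<j$ the left side of $\boxij$ is not initialized but is the right side of $\square_{1,j-1}$ (symmetrically for $j=1<i$), so handle the top and left sides separately rather than splitting on ``$i=1$ or $j=1$''.
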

\begin{proof}
Correctness follows by induction over the processed blocks. 
The base case is the initialization of the first row and first column:
At a point \((\alpha,0)\),
the second prefix is empty, so the only possible alignment deletes all
\(\alpha\) characters of \(X[1..\alpha]\), giving score \(f(\alpha,0)=-\alpha\).
Similarly, at a point \((0,\beta)\), the only possible alignment inserts all
\(\beta\) characters of \(Y[1..\beta]\), giving score \(f(0,\beta)=-\beta\).
These are the initial boundary values used by the algorithm.

Now assume that all values on the top and left side of a block \(\boxij\) are correct. 
Every path reaching a point on the bottom or right side of \(\boxij\) enters \(\boxij\) for the last time through its top or left side and then stays inside the block until the output point. By \Cref{lem:block-transfer}, the best suffix inside the block depends only on the displacement from this entry point. Therefore \cref{eq:bottom,eq:right} and \cref{alg:block} compute exactly the best value for every output-boundary point. This proves correctness for all block boundaries, in particular for \((\N,\M)\). By \Cref{lem:scalar}, an alignment maximizing this scalar score is a minimum-cost LCS alignment.

By \Cref{lem:block-linear}, block \(\boxij\) is processed in \(O(x_i+y_j)\) time. Summing over all blocks gives
\[
	\sum_{i=1}^n\sum_{j=1}^m O(x_i+y_j)
	=
	O\!\left(m\sum_{i=1}^n x_i+n\sum_{j=1}^m y_j\right)
	=
	O(m\N+n\M).
\]
\end{proof}

\begin{example}
We now revisit the same strings in run-length encoded form:
\[
	\rle(X)=\texttt{T}^2\texttt{A}^2\texttt{C}^2
	\text{~and~}
	\rle(Y)=\texttt{C}^2\texttt{A}^2\texttt{G}^2.
\]
Thus \(\N=\M=6\), \(n=m=3\), and \(w=13\). The block-boundary algorithm does not need to materialize the full scalar table from the example after \Cref{thm:quadratic}. The run-boundary coordinates are
\[
	p_1=q_1=0,
	\quad p_2=q_2=2,
	\quad p_3=q_3=4,
	\quad p_4=q_4=6.
\]
It suffices to compute the following run-boundary entries, which are exactly the corresponding submatrix of the scalar table:
\[
\begin{array}{lc|rrrr}
	 && \epsilon & \texttt{C}^2 & \texttt{A}^2 & \texttt{G}^2\\
	 && 0 & 2 & 4 & 6\\
\hline
\epsilon & 0 &  0 & -2 & -4 & -6\\
\texttt{T}^2 & 2 & -2 &  0 & -2 & -4\\
\texttt{A}^2 & 4 & -4 &  2 & 26 & 24\\
\texttt{C}^2 & 6 & -6 & 22 & 24 & 26
\end{array}
\]
The entry \(22\) at row \(6\), column \(2\) represents the high-scoring partial alignment that has already matched \(\texttt{CC}\) but paid four deletions before doing so. Extending it to the full strings would require four further insertions, giving score \(18\). The final entry \(26\), instead, realizes the gap-free alignment matching \(\texttt{AA}\).
\end{example}

\paragraph{Computational lower bound.}
The stated time complexity in \cref{thm:main} is conditionally optimal up to strongly sublinear factors in the worst case. 
Let us call
\(
	\chi=m\N+n\M
\)
the \emph{compressed cost measure} of the algorithm. 
Our specialized LCS problem generalizes the ordinary LCS problem:
from any minimum-cost LCS alignment, the ordinary LCS length is obtained as the number of equal-pair diagonal steps.
Suppose that, for some constant \(\varepsilon>0\), 
there were an algorithm that solves \textsc{MinCostLCS-2RLE} on all RLE inputs in \(O(\chi^{1-\varepsilon})\) time. 
Given two ordinary strings of decoded lengths \(\N\) and \(\M\), 
we regard them as RLE strings. 
Their numbers of runs satisfy \(n\le \N\) and \(m\le \M\), and therefore
\(
	\chi=m\N+n\M\le 2\N\M.
\)
The assumed algorithm would then compute the ordinary LCS length in \(O((\N\M)^{1-\varepsilon})\) time. 
In particular, for balanced inputs \(\N=\M=L\), 
this gives an \(O(L^{2-2\varepsilon})\)-time algorithm for LCS,
contradicting the known SETH-based conditional lower bound for LCS~\cite{abboud15tight}. 
Hence,
unless SETH fails, 
the present problem admits no \(O(\chi^{1-\varepsilon})\) time algorithm for any constant \(\varepsilon>0\).

\paragraph{Recovering an alignment.}
The theorem gives the value of an optimum alignment. To output an alignment, store for every boundary value the entry point and the case attaining the maximum in \Cref{eq:bottom,eq:right}. Backtracking from \((\N,\M)\) then gives a sequence of block-internal subproblems. Inside a block, the corresponding local path is recovered greedily: use \(\min(u,v)\) diagonal steps and then the remaining \(|u-v|\) gap steps. In equal-character blocks, the diagonal steps are equal-pair steps; in unequal-character blocks, they are replacements. Storing these backpointers uses \(O(m\N+n\M)\) space. 

If only the optimum score is required, the working space can be reduced by processing one strip of blocks at a time:

\begin{lemma}\label{lem:space}
If only the optimum score is required, the block-boundary algorithm can be
implemented in
\(
	O\!\left(
		\min\left(
			\M+\max_{i\in[1..n]} x_i,\;
			\N+\max_{j\in[1..m]} y_j
		\right)
	\right)
	\subset 
	O(\N+\M)
\)
working space.
\end{lemma}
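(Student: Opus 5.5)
We process the blocks in row-major order, one horizontal strip of blocks (fixed run \(i\) of \(X\)) at a time, and only ever keep the boundary values that later blocks still need. The symmetric column-major variant gives the other term of the minimum, so we run whichever of the two is cheaper. The bound to aim for in the row-major variant is \(O(\M+\max_i x_i)\).

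\textbf{Step 1 (invariant).} Before strip \(i\) is processed, we store a single array \(\textsf{row}[0..\M]\) holding \(f(p_i,\beta)\) for all \(\beta\in[0..\M]\). This is the top boundary of every block \(\square_{i,1},\dots,\square_{i,m}\). For \(i=1\) it is initialized to \(f(0,\beta)=-\beta\).

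\textbf{Step 2 (sweep inside strip \(i\)).} We sweep \(j=1,\dots,m\) and carry one column array \(\textsf{col}[0..x_i]\) holding \(f(p_i+r,q_j)\), the left boundary of \(\square_{i,j}\). For \(j=1\) it is \(f(p_i+r,0)=-(p_i+r)\), computable on the fly. We call \textsc{LinearBlockUpdate} with \(T\) taken as the slice \(\textsf{row}[q_j..q_{j+1}]\) and \(L=\textsf{col}\). Its temporary arrays \(B,R\), together with the prefix-maximum and sliding-window (deque) structures of \cref{lem:boundary-acceleration}, use \(O(x_i+y_j)\) space.

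\textbf{Step 3 (in-place overwrite).} We then write \(B\) into \(\textsf{row}[q_j..q_{j+1}]\) and \(R\) into \(\textsf{col}\). This is the step that needs care.
\begin{itemize}[nosep]
\item The top boundary of \(\square_{i,j}\) is never read again, since the only other block containing these points, \(\square_{i-1,j}\), has already been processed.
\item The shared corner \(\textsf{row}[q_{j+1}]\) is both the last top entry of \(\square_{i,j}\) and the first top entry of \(\square_{i,j+1}\). Because we copy \(T\) into a local array before updating, overwriting it with \(B[y_j]\) does not affect the current block.
\item For \(\square_{i,j+1}\), its top-left corner is \((p_i,q_{j+1})\) and its left boundary \(\textsf{col}\) starts at \(r=0\), so both must refer to \(f(p_i,q_{j+1})\). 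Overwriting the row slice of \(\square_{i,j}\) replaces \(\textsf{row}[q_{j+1}]\) by \(f(p_{i+1},q_{j+1})\), so we read the top boundary of \(\square_{i,j+1}\) as the original value.
\item To make this consistent, we save \(\textsf{row}[q_{j+1}]\) before overwriting, or equivalently delay writing back \(B[y_j]\) by one block. This bookkeeping is the main obstacle, and the delayed write-back is what we would try first.
\end{itemize}
After the last block, \(\textsf{row}\) holds \(f(p_{i+1},\cdot)\), so the invariant of Step 1 is restored.

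\textbf{Step 4 (correctness and space).} Correctness follows from the argument of \cref{thm:main}: each block receives exactly the same input values as in \cref{alg:main}, and the max-updates there reduce to plain assignments, because each output point of a block has exactly one block for which it is the bottom or right output. Space is \(O(\M)\) for \(\textsf{row}\) plus \(O(\max_i x_i)\) for \(\textsf{col}\), plus \(O(x_i+y_j)\subseteq O(\M+\max_i x_i)\) of temporary space. Time is unchanged at \(O(m\N+n\M)\). Swapping the roles of \(X\) and \(Y\) (column strips) gives \(O(\N+\max_j y_j)\), and choosing the smaller of the two proves the claim, which is at most \(O(\N+\M)\).
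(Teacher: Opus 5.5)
Your proposal is correct and follows the paper's proof. The common structure is horizontal strips, an \(O(M)\) array for the strip boundary, an \(O(x_i)\) array for the vertical boundary handed from block to block, \(O(x_i+y_j)\) temporaries, and the symmetric column version. The only difference is that you overwrite one row array in place, which forces the delayed write-back at the shared corner \((p_i,q_{j+1})\). The paper instead keeps separate top and bottom arrays for the strip, so that bookkeeping never arises and is not really an obstacle.

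One justification in Step 4 is false, though the conclusion still holds. Corner points are outputs of several blocks: for example, \((p_{i+1},q_{j+1})\) is a bottom output of both \(\square_{i,j}\) and \(\square_{i,j+1}\), and a right output of both \(\square_{i,j}\) and \(\square_{i+1,j}\). Your own scheme in fact writes \(\textsf{row}[q_{j+1}]\) twice. Plain assignment is valid for a different reason: given correct inputs, every block computes the exact optimum \(f\) at each of its output points, as in the correctness argument of \cref{thm:main}, so all writers agree.
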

\begin{proof}
We describe the implementation by horizontal strips; the vertical-strip
implementation is symmetric. 
Fix Run \(i\) of \(X\), and process the blocks
\(\boxij\) for \(j=1,2,\ldots,m\) from left to right.

During this scan, we store the top boundary of the current strip, namely the
values \(f(p_i,\beta)\) for all \(\beta\in[0..\M]\), and the bottom boundary
being computed, namely the values \(f(p_{i+1},\beta)\) for all
\(\beta\in[0..\M]\). These two arrays use \(O(\M)\) space.

For the current block \(\boxij\), we additionally store the left boundary
\(
	L[r]=f(p_i+r,q_j)
	\text{~for~} r\in[0..x_i],
\)
and the right boundary computed for the same block as described in \cref{alg:block},
\(
	R[r]=f(p_i+r,q_{j+1})
	\text{~for~} r\in[0..x_i].
\)
These arrays use \(O(x_i)\) space. After finishing block \(\boxij\), the array
\(R\) becomes the left-boundary array for the next block \(\square_{i,j+1}\).

The remaining temporary arrays used by the linear block update have size
\(O(x_i+y_j)\). Since \(y_j\le \M\), the total working space while processing
strip \(i\) is
\(
	O(\M+x_i).
\)
Taking the maximum over all strips gives
\(
	O\!\left(\M+\max_{i\in[1..n]}x_i\right)
\)
working space.

Alternatively, we can process vertical strips: 
By the symmetric argument, this
uses
\(
	O(\N+\max_{j\in[1..m]}y_j)
\)
working space. Choosing the better of the two implementations gives the stated claim.
\end{proof}

With some additional work, we can also apply the space improvements addressed in \cref{lem:hirschberg} to the RLE-setting.
For that, we first state the base case.

\begin{lemma}\label{lem:single-run-base}
Let \(X,Y\in\Sigma^*\), with \(|X|=N\) and \(|Y|=M\). If one of the two
strings consists of a single run, then a minimum-cost LCS alignment of \(X\)
and \(Y\) can be output in \(O(N+M)\) time and \(O(1)\) working space,
excluding the output.
\end{lemma}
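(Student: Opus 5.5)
We would prove the lemma by writing down an explicit optimal alignment and checking optimality with two elementary bounds, rather than using any dynamic program. By symmetry (swap the roles of insertions and deletions), assume \(Y=c^{M}\) is the single run. Let \(k\) be the number of occurrences of \(c\) in \(X\).

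\textbf{Two lower bounds.} Every alignment path \(P\) satisfies the following.
\begin{itemize}
\item An equal pair must align an occurrence of \(c\) in \(X\) with a character of \(Y\). Each character is used at most once, so \(\equ(P)\le\min(k,M)\).
\item Counting the horizontal and vertical displacement of \(P\) gives \(\del(P)-\ins(P)=N-M\). Hence \(\ins(P)+\del(P)\ge|N-M|\).
\end{itemize}
So any path attaining \(\equ(P)=\min(k,M)\) and \(\ins(P)+\del(P)=|N-M|\) is lexicographically optimal. By \Cref{lem:scalar} it is then a minimum-cost LCS alignment.

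\textbf{Constructing such a path.} We use only one kind of gap. Every character that is not deleted or inserted is placed on a diagonal step.
\begin{itemize}
\item \emph{Case \(N\le M\).} Align all of \(X\) diagonally against \(Y[1..N]\), then insert the remaining \(M-N\) characters of \(Y\). Every \(c\) of \(X\) meets a \(c\), so \(\equ=k=\min(k,M)\), since \(k\le N\le M\). The number of gaps is \(M-N\).
\item \emph{Case \(N>M\) and \(k\ge M\).} Delete \(N-M\) characters of \(X\) and align the remaining \(M\) diagonally. Keep exactly the first \(M\) occurrences of \(c\) and delete everything else; this deletes \(N-M\) characters in total. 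Then \(\equ=M\).
\item \emph{Case \(N>M>k\).} Delete exactly \(N-M\) characters of \(X\) that are different from \(c\). This is possible because \(N-k>N-M\). Keep all \(k\) occurrences of \(c\) and align the remaining \(M\) characters diagonally, so \(\equ=k\).
\end{itemize}
In each case the path is monotone, ends at \((N,M)\), and meets both lower bounds with equality.

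\textbf{Output in \(O(1)\) working space.} We make one pass over \(X\) to compute \(k\) and \(N\); \(M\) is the exponent of the single run. We then make a second left-to-right pass over \(X\) that emits the columns of the alignment. This pass keeps a counter \(D\) of deletions still to be made, initialized to \(\max(0,N-M)\), and, in the case \(k\ge M\), a counter of occurrences of \(c\) still to be kept.

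The rule for each character of \(X\) is greedy:
\begin{itemize}
\item In the case \(N>M>k\), delete the current character if it differs from \(c\) and \(D>0\).
\item In the case \(k\ge M\), delete the current character if it differs from \(c\), or if it is an occurrence of \(c\) beyond the first \(M\).
\item Otherwise, emit a diagonal column.
\end{itemize}
After the pass, in the case \(N\le M\), emit the \(M-N\) insertion columns. Each column costs \(O(1)\) time, so the total is \(O(N+M)\) time and \(O(1)\) words of extra space.

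\textbf{Main obstacle.} The step needing the most care is checking that this greedy streaming rule produces exactly \(N-M\) deletions in each case and never runs out of characters for the required diagonals. This is an easy counting argument. In the case \(k\ge M\), there are exactly \(N-k\) non-\(c\) characters and \(k-M\) surplus occurrences of \(c\), which sum to \(N-M\). In the case \(N>M>k\), there are at least \(N-M\) non-\(c\) characters available, so \(D\) reaches \(0\) and exactly \(M\) characters remain for the diagonals.
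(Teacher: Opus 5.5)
Your proof is correct. The alignment you stream out is essentially the one produced by the paper's greedy scan. That scan takes \(\ell=\min(\#_c(X),M)\) occurrences of \(c\) as equal pairs and up to \(r=\min(N-\#_c(X),M-\ell)\) non-\(c\) characters as replacements, and deletes everything else. In your case \(N>M>k\) it deletes the last surplus non-\(c\) characters rather than the first, which is equally optimal.

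The genuine difference is the optimality argument. The paper argues that a replacement beats a deletion plus an insertion, and that no alignment with \(\ell\) equal pairs admits more than \(r\) replacements. Via \(\ins+\del=N+M-2\equ-2\rep\), this means the gaps are minimized. That replacement bound is asserted rather than proved. You instead use the displacement identity \(\del(P)-\ins(P)=N-M\), valid for every alignment path, to get \(\ins(P)+\del(P)\ge|N-M|\). You then show this bound is attained simultaneously with \(\equ(P)\le\min(k,M)\).

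Your certificate is more elementary and fully self-contained. It also exposes what the paper leaves implicit: in the single-run case the optimal gap cost is exactly \(|N-M|\) (indeed \(N+M-2\ell-2r=|N-M|\)). Hence every minimum-cost LCS alignment uses gaps of only one kind. The paper's version, in exchange, needs no case distinction, since one uniform scan with two counters covers all cases.

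Your final appeal to \cref{lem:scalar} is unnecessary. Attaining both bounds is already lexicographic optimality, which is the definition of a minimum-cost LCS alignment.
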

\begin{proof}
It suffices to consider the case \(Y=c^M\); the other case is symmetric. 
Let
\(
	h=\#_c(X)
	\)
be the number of occurrences of $c$ in $X$ and let
$\ell=\min(h,M)$, which we claim is the length of an LCS between $X$ and $Y$:
Every equal-pair step must pair an occurrence of \(c\) in \(X\) with a character of \(Y\). 
Hence, no alignment can have more than \(\ell\) equal-pair steps. 
We get this number by pairing \(\ell\) occurrences of \(c\) in \(X\) with \(\ell\) characters of \(Y\). 

It remains to minimize the number of insertions and deletions among alignments with \(\ell\) equal-pair steps. 
Since \(Y\) contains only \(c\)'s, any diagonal step that is not an equal-pair step must pair a non-\(c\) character of \(X\) with a \(c\) from \(Y\), and is therefore a replacement. 
By our scoring function, such a replacement is always better than using one deletion and one insertion. 
Hence, after reserving \(\ell\) characters of \(Y\) for equal-pair steps, 
we should use as many of the remaining characters of \(Y\) as possible for replacement steps. 
The maximum possible number of replacements is
\(
	r=\min(N-h,M-\ell).
\)

Consequently, an optimal alignment has $\ell$ equal-pair steps and $r$ replacement steps.
We can compute such an alignment greedily by counting how many equal-pair steps and how many replacement steps we have already formed by scanning \(X\) from left to right. 
\begin{itemize}
	\item Whenever we see a \(c\) and still need equal-pair steps, we output an equal-pair step.
	\item Whenever we see a non-\(c\) and still need replacement steps, we output a replacement with the next unused character of \(Y\). 
	\item All other characters of \(X\) are deleted. 
\end{itemize}
After the scan, any unused characters of \(Y\) are inserted.

No alignment with \(\ell\) equal-pair steps can have more than \(r\) replacement steps, 
so the produced alignment minimizes the number of insertions and deletions among all LCS alignments. 
\end{proof}

\begin{figure}[t]
	\centering
	\includegraphics{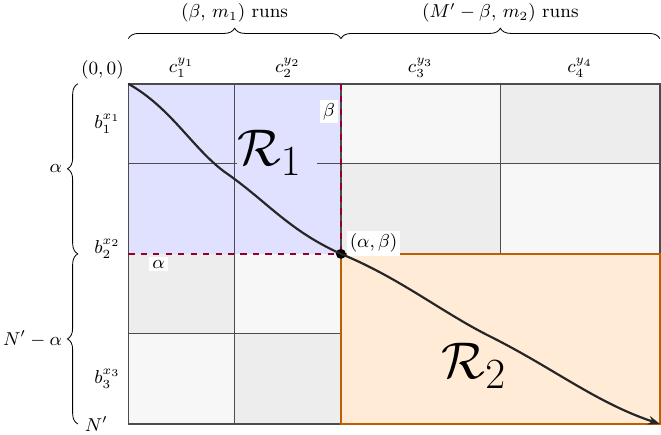}
	\caption{Recursive split for the case \(m'N'\ge n'M'\) in the proof of \cref{thm:output-alignment}.
The second string \(Y'\) is split at a run boundary after decoded length \(\beta\). 
The score vectors at \(\beta\) determine a crossing point \((\alpha,\beta)\), 
yielding the two recursive subproblems 
\(\mathcal R_1 = \bigl(X'[1..\alpha],\,Y'[1..\beta]\bigr)\) 
and 
\(\mathcal R_2 = \bigl(X'[\alpha+1..N'],\,Y'[\beta+1..M']\bigr)\).
	The same original run \(b_2^{x_2}\) appears as the last run of \(\mathcal R_1\) 
	and as the first run of \(\mathcal R_2\).
	In the cost measure its total contribution is \(\beta+(M'-\beta)=M'\).
}
\label{fig:rle-hirschberg-split}
\end{figure}

The main difficulty for a Hirschberg-style divide-and-conquer reconstruction in the aimed RLE space bounds is that we need to 
recurse in the dominating dimension, which can change depending on the subproblem.
This makes it also necessary to cut one of the input strings at a run boundary.
In what follows, we show that, even with these two complications, we can still achieve the desired time and space bounds.

\begin{theorem}\label{thm:output-alignment}
A minimum-cost LCS alignment of two nonempty RLE strings can be output in \(O(m\N+n\M)\) time and \(O(\N+\M)\) working space, excluding the output.
In particular, the LCS string induced by such a minimum-gap-cost LCS alignment can be output within the same bounds.
\end{theorem}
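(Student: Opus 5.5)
We propose a Hirschberg-style divide and conquer on run-length encoded subproblems. A subproblem is a pair \((X',Y')\) of contiguous substrings of \(X\) and \(Y\), given by run indices and offsets into the first and last runs. We write \(N',M'\) for their decoded lengths and \(n',m'\) for their numbers of runs. Its \emph{cost} is \(\chi'=m'N'+n'M'\), which by \cref{thm:main} bounds the time of one score-only block-boundary pass on it.

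\textbf{Base case.} If \(n'=1\) or \(m'=1\), we apply \cref{lem:single-run-base}. This outputs an optimal alignment in \(O(N'+M')\subseteq O(\chi')\) time and \(O(1)\) space.

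\textbf{Split step.} Otherwise assume \(m'N'\ge n'M'\); the other case is symmetric with the roles of \(X'\) and \(Y'\) exchanged.
\begin{itemize}[nosep]
\item We pick the run boundary of \(Y'\) at run index \(\lfloor m'/2\rfloor+1\), and let \(\beta\) be its decoded position.
\item We run the vertical-strip version of \cref{lem:space} forward on \(X'\) versus \(Y'[1..\beta]\). It keeps only the current column boundary, so it yields the vector \(\mathsf{fwd}[\alpha]=f(\alpha,\beta)\) for all \(\alpha\in[0..N']\).
\item We run the same procedure on the reversed strings to obtain \(\mathsf{bwd}[\alpha]\), the best score from \((\alpha,\beta)\) to \((N',M')\).
\item We choose \(\alpha^*\in\argmax_\alpha(\mathsf{fwd}[\alpha]+\mathsf{bwd}[\alpha])\). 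Every alignment path crosses column \(\beta\), and the score \(s\) is additive (as in \cref{lem:hirschberg}). Hence concatenating optimal alignments of \(\mathcal R_1=(X'[1..\alpha^*],Y'[1..\beta])\) and \(\mathcal R_2=(X'[\alpha^*+1..N'],Y'[\beta+1..M'])\) is optimal.
\item By \cref{lem:scalar}, the scalar optimum is a minimum-cost LCS alignment. We recurse on \(\mathcal R_1\) and then on \(\mathcal R_2\), so the output is written left to right.
\end{itemize}
The induced LCS string is obtained by emitting only the equal-pair steps as they are output.

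\textbf{Space.} Each split uses \(O(N'+\max_j y_j)\subseteq O(N+M)\) working space, which is freed before recursing. The recursion stack holds \(O(1)\) words per level, and the depth is \(O(\log n+\log m)\) because the split dimension's run count halves. So the total working space is \(O(N+M)\).

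\textbf{Time (main obstacle).} We must bound the total cost over the recursion tree by \(O(mN+nM)\). The difficulty is that the split dimension varies between subproblems, and the split row \(\alpha^*\) may cut a run of \(X'\), so that one run appears in both children (\cref{fig:rle-hirschberg-split}).

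\begin{itemize}[nosep]
\item \emph{Children's cost.} Let \(n_1+n_2\le n'+1\) be the children's run counts in \(X'\), and \(m_1,m_2\le\lceil m'/2\rceil\) with \(m_1+m_2=m'\) the run counts in \(Y'\). Then
\[
m_1\alpha^*+m_2(N'-\alpha^*)\le\lceil m'/2\rceil N'
\quad\text{and}\quad
n_1\beta+n_2(M'-\beta)\le n'M'+M'.
\]
The shared run contributes \(\beta+(M'-\beta)=M'\) in total.
\item \emph{Absorbing the extra terms.} Since \(m'N'\ge n'M'\) and \(m'\ge2\), we get
\[
\chi_1+\chi_2\le\tfrac{3}{4}m'N'+n'M'+M'\le\tfrac{7}{8}\cdot 2m'N'+\dots
\]
The additive \(M'\) is at most \(n'M'\le m'N'\), but it must be absorbed more carefully. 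Our first attempt is to prove \(\chi_1+\chi_2\le c\,\chi'+O(N'+M')\) for some constant \(c<1\).
\item \emph{Closing the sum.} The \(O(N'+M')\) terms sum to \(O((N+M)\log)\) over the levels, which is too much in general. We would instead charge them against the \(\Omega(N'+M')\) work already paid at that node, since \(\chi'\ge N'+M'\).
\item \emph{Fallback potential.} If this fails, we would use the potential \(\Phi=(m'-1)N'+(n'-1)M'\), which vanishes on base cases. We would verify that a split decreases \(\Phi\) by at least a constant fraction of \(\chi'\). This yields a geometric bound on the total work, giving \(O(mN+nM)\) overall.
\end{itemize}
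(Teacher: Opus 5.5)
Your algorithm matches the paper's: the single-run base case via \cref{lem:single-run-base}, splitting the dimension whose term dominates in \(m'N'\) versus \(n'M'\) at a middle run boundary, the score-only passes via \cref{lem:space}, the argmax crossing, and the left-to-right recursion. Your space analysis also matches. The gap is in the time analysis, which is the heart of the theorem. You bound \(n_1\beta+n_2(M'-\beta)\le n'M'+M'\), treating the run of \(X'\) cut at \(\alpha^*\) as an extra \(M'\). That bound yields no contraction. For instance, with \(n'=2\), \(m'=3\), runs of length \(2\) (so \(N'=4\), \(M'=6\), \(m'N'=n'M'=12\), \(\chi'=24\)), it gives only \(\chi_1+\chi_2\le 2\cdot4+12+6=26>\chi'\).

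None of your remedies closes this. The additive term lives in the children's cost measure, not in the work done at the current node. It is therefore re-incurred throughout both subtrees rather than paid once. Using \(N'+M'\le\chi'\) merely turns \(c\chi'+O(N'+M')\) into a factor \(\ge1\), and summing per level gives the logarithmic overhead you yourself noted. The potential \(\Phi=(m'-1)N'+(n'-1)M'\) does not vanish on base cases with \(n'=1<m'\). Moreover, from your bound its drop is only \(m'N'-\lceil m'/2\rceil N'-M'\), which equals \(12-8-6=-2\) in the example above. So the recurrence is never actually established. (Minor: you should also treat an empty side as a base case, since \(\alpha^*\in\{0,N'\}\) produces such children.)

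The missing step is a per-run accounting over \(X'\), and you had the key fact but drew the wrong conclusion from it. A run lying entirely in \(\mathcal R_1\) contributes \(\beta\le M'\), and a run lying entirely in \(\mathcal R_2\) contributes \(M'-\beta\le M'\). The run cut at \(\alpha^*\) contributes \(\beta+(M'-\beta)=M'\), which is exactly one run's share of \(n'M'\), not an additional term. Hence \(n_1\beta+n_2(M'-\beta)\le n'M'\), and with \(m'N'\ge n'M'\),
\[
\chi_1+\chi_2\le\tfrac34 m'N'+n'M'=\chi'-\tfrac14 m'N'\le\tfrac78\chi'.
\]
The recurrence \(T(\chi)\le O(\chi)+T(\chi_1)+T(\chi_2)\) with \(\chi_1+\chi_2\le\frac78\chi\) then solves to \(O(\chi)\), which is the paper's argument.
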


\begin{proof}
We follow \cref{lem:hirschberg} and implement the divide-and-conquer reconstruction by dividing the input strings at run boundaries.

Consider a recursive subproblem with strings \(X'\) and \(Y'\). 
Let \(N' = |X'|\) and \(M' = |Y'|\) be their decoded lengths, and let \(n'\) and \(m'\) be their numbers of runs. 
We define the compressed cost measure of this subproblem as
\(
	\chi=m'N'+n'M'.
\)
If $X$ or $Y$ are empty, the optimum alignment is the unique all-gap alignment.
If \(n'=1\) or \(m'=1\), we solve the subproblem directly by \cref{lem:single-run-base}. 
Thus, in the following, assume \(n',m'\ge 2\).

Without loss of generality suppose that
\(
	m'N'\ge n'M'
\)
 --- otherwise we swap $X'$ and $Y'$ and change them back after the recursive call.
We split the \(Y'\) at a run boundary so that the two parts
contain \(m_1\) and \(m_2\) runs, respectively, with
\(
	m_1, m_2\le \lceil m'/2\rceil\le \frac34 m'.
\)
Let \(\beta\) be the decoded length of the left part, which is also the vertical separator of the decoded part with the decoded length of the right part being \(M'-\beta\).

Following \cref{lem:hirschberg}, we compute the forward scores \(\textsf{fwd}[\gamma]\) for all separator points \((\gamma,\beta)\), where \(\gamma\in[0..N']\). 
Here \(\textsf{fwd}[\gamma]\) is the optimum score for aligning \(X'[1..\gamma]\) with \(Y'[1..\beta]\). 
Similarly, we compute the backward scores \(\textsf{bwd}[\gamma]\), where \(\textsf{bwd}[\gamma]\) is the optimum score for aligning \(X'[\gamma+1..N']\) with \(Y'[\beta+1..M']\). 
We obtain the forward scores by the score-only block-boundary algorithm of \cref{lem:space} and the backward scores by the same algorithm applied to the reversed suffixes. 
Finally, we choose
\[
	\alpha\in
	\argmax_{\gamma\in[0..N']}
		\bigl(\textsf{fwd}[\gamma]+\textsf{bwd}[\gamma]\bigr).
\]
See \cref{fig:rle-hirschberg-split} for an illustration of the split.
The forward computation costs
\(
	O(m_1 N' + n' \beta),
\)
and the backward computation costs
\(
	O(m_2 N' + n'(M'-\beta)).
\)
Since \(m_1+m_2=m'\), their sum is \(O(\chi)\).

By the proof of \cref{lem:hirschberg}, we have that some optimum path crosses the separator at \((\alpha,\beta)\), 
and we can recurse on the two subproblems
\[
	\text{(I):~} (X'[1..\alpha],\,Y'[1..\beta])
	\text{~and~}
	\text{(II):~} (X'[\alpha+1..N'],\,Y'[\beta+1..M']).
\]
\textbf{Time Complexity. }
It remains to bound the total compressed cost of the two recursive subproblems (I) and (II). 
Let \(n_1\) and \(n_2\) denote the numbers of runs of the two corresponding pieces of \(X'\), where the run containing \(\alpha\), if any, is counted once in each recursive subproblem. 
The two recursive subproblems (I) and (II) have the compressed cost measures
\(
m_1\alpha+n_1 \beta
\)
and
\(
m_2(N'-\alpha)+n_2(M'-\beta),
\)
respectively.
Their sum is 
\[
	\bigl(
		m_1\alpha
		+
		m_2(N'-\alpha)
\bigr)
	+
	\bigl(
	n_1 \beta
	+n_2(M'-\beta)
\bigr).
\]
We bound the two groups of summands separately. First,
\[
	m_1\alpha+m_2(N'-\alpha)
	\le \max(m_1,m_2)N'
	\le \frac34 m'N'.
\]
Second, we claim that
\(
	n_1\beta +n_2(M'-\beta)\le n'M'.
\)
To see this, account for the contribution of each run of \(X'\) separately.
Every run of \(X'\) lies completely in one of the two recursive subproblems,
except possibly the run containing the crossing position \(\alpha\). 
A run lying completely on the left contributes \(\beta\), and a run lying completely on the right contributes \(M'-\beta\). 
If \(\alpha\) lies inside a run of \(X'\), then this run appears as a suffix run in the left recursive subproblem and as a prefix run in the right recursive subproblem. 
Its total contribution is then
\(
	\beta+(M'-\beta)=M',
\)
which is exactly the contribution of one run to the parent term \(n'M'\).
Summing over all runs of \(X'\) gives the claimed inequality.

Hence the total compressed cost of the two recursive subproblems is at most
\(
	\frac34 m'N'+n'M'.
\)
Since \(m'N'\ge n'M'\), this is at most
\(
	\frac78(m'N'+n'M')=\frac78\chi.
\)

Thus, if \(T(\chi)\) denotes the reconstruction time for a recursive subproblem
of compressed cost \(\chi\), then
\[
	T(\chi)\le O(\chi)+T(\chi_1)+T(\chi_2),
	\text{~with~}
	\chi_1+\chi_2\le \frac78\chi.
\]
By the Master Theorem, this recurrence solves to \(T(\chi)=O(\chi)\).
For the original input this gives \(O(m\N+n\M)\) time.

\textbf{Space Complexity. }
At any point of the recursion, the score vectors and the temporary arrays of the
score-only block-boundary algorithm have size \(O(N'+M')\le O(\N+\M)\). 
These arrays are discarded before the recursive calls are made. 
The recursion stack stores only recursive subproblem information and has size \(O(\N+\M)\). 
Thus, the working space is \(O(\N+\M)\), excluding the output. 
The LCS string is obtained by outputting the characters of the equal-pair diagonal steps of the produced alignment.
\end{proof}

\section{$O(nm)$-space computation}

By representing the boundary values of each run block in piecewise-affine forms, 
we can apply the Hirschberg trick to reduce the space usage of our algorithm to \(O(nm)\) without increasing the time complexity.
We start with some basic arithmetics on piecewise-affine forms, and subsequently study their computational complexities involved for computing the boundary values of a run block.

\subsection{Piecewise-affine forms}

We say that a function $f$ with an integer domain is $K$-\emph{piecewise-affine} if we can partition the domain of $f$ into $K$ disjoint nonempty integer intervals such that $f$ is affine on each interval.
The \emph{piecewise-affine form} of $f$ is a sorted list of the $K$ intervals, ordered by their left endpoints, augmented by offset and slope values.
In detail, for each interval $[a..b]$ in the partition, we store a pair $(c,d)$ such that $f(x)=c+d\cdot x$ for all $x \in[a..b]$.
Each $([a..b], c, d)$ is called a \emph{piece} of $f$. It is called \emph{affine} if $d \neq 0$, otherwise \emph{constant}.
We use $K$-PAF as a shorthand for a piecewise-affine form~$f$ having $K$ intervals.
Hence, $f$ can be stored in $O(K)$ space.
While we could evaluate $f$ in $O(\log K)$ time by binary search on the list of intervals, we refrain from doing so --- 
we only need linear scans of the list of pieces in our algorithm.

We use PAFs for representing the score array on one side of a block-boundary.
When such a score array is split into pieces belonging to RLE run intervals,
we use any fixed disjoint partition of its domain that assigns a boundary point shared by two consecutive runs to one of the two adjacent run intervals. 
This convention can change the number of pieces only by a constant factor.
In what follows, we translate the arithmetic operations studied in \cref{sec:time:oneblock} from block-boundary cells to PAFs.

\begin{lemma}\label{lem:paf:basic-operations}
Let \(f\) be a $K$-PAF.
Then restriction to an integer subinterval, affine changes of the
argument of the form \(z\mapsto z+c\) and \(z\mapsto c-z\), 
and addition of an affine function preserve \(O(K)\) representation size and can be performed in \(O(K)\) time.
\end{lemma}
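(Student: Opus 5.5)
We handle the three operations one at a time, each by a single left-to-right scan of the sorted piece list of \(f\). Throughout, \(f\) is given by pieces \(([a_k..b_k],c_k,d_k)\) for \(k\in[1..K]\), sorted by left endpoint, with the intervals partitioning the domain \(D=[a_1..b_K]\).

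\textbf{Restriction.} Let \([s..e]\subseteq D\). We scan the list and drop every piece whose interval is disjoint from \([s..e]\). For the remaining pieces we replace \([a_k..b_k]\) by \([\max(a_k,s)..\min(b_k,e)]\) and keep \((c_k,d_k)\) unchanged. Only the first and last surviving pieces are actually clipped, and each surviving interval stays nonempty. The result therefore has at most \(K\) pieces, remains sorted, and is computed in \(O(K)\) time.

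\textbf{Argument changes.} Consider first \(g(z)=f(z+c)\). A piece \(([a..b],c_k,d_k)\) of \(f\) becomes the piece \(([a-c..b-c],\,c_k+d_kc,\,d_k)\) of \(g\), because \(c_k+d_k(z+c)=(c_k+d_kc)+d_kz\). The translation preserves the order of the pieces.

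Now consider \(g(z)=f(c-z)\). A piece \(([a..b],c_k,d_k)\) becomes \(([c-b..c-a],\,c_k+d_kc,\,-d_k)\). Reflection reverses the order of the intervals, so we emit the new pieces while scanning the old list backwards. Equivalently, we scan forwards and reverse the output list at the end in \(O(K)\) time.

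In both cases the number of pieces is exactly \(K\), each piece is handled in \(O(1)\) word operations, and the resulting list is sorted by left endpoint.

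\textbf{Adding an affine function.} Let \(h(z)=\gamma+\delta z\). We replace each piece \((c_k,d_k)\) by \((c_k+\gamma,\,d_k+\delta)\) and keep its interval. This preserves the piece count and takes \(O(K)\) time. A piece may switch between \emph{constant} and \emph{affine}, but that is only a label and does not affect the size bound.

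\textbf{Word-size check.} The only point needing care is that all new coefficients still fit in \(O(1)\) machine words. In our use, the arguments are block coordinates in \([0..\N+\M]\) and the function values are scores of magnitude \(O((\N+\M)^2)\). The slopes that arise are bounded by \(O(w)\), for instance \(\pm1\) and \(\pm(w+1)\) as in \cref{lem:boundary-acceleration}. Hence the recomputed offsets \(c_k+d_kc\) have magnitude polynomial in \(\N+\M\) and fit in \(O(1)\) words. So each piece update is \(O(1)\) time, and every stated operation runs in \(O(K)\) time and outputs an \(O(K)\)-PAF.
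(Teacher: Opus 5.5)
Your proposal is correct and takes the same approach as the paper: restriction only deletes or clips pieces, the two argument changes translate or reverse the sorted piece list, and adding an affine function only updates the coefficients stored on each piece. Beyond that, you write out the coefficient updates explicitly (for example $(c_k+d_kc,\,-d_k)$ on $[c-b..c-a]$ for $z\mapsto c-z$) and add a word-size remark that the paper leaves implicit.
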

\begin{proof}
	Restriction to a subinterval can only delete or shorten pieces. Affine changes
	of the argument of the form \(z\mapsto z+c\) and \(z\mapsto c-z\) translate or
	reverse the interval list. Adding an affine function changes only the affine
	expression stored on each interval. Thus these operations preserve \(O(K)\)
	pieces and take \(O(K)\) time.
\end{proof}

For the next arithmetic operation, we need a helper primitive for computing the upper envelope of constant values supported on ordered non-nested intervals.

\begin{lemma}\label{lem:ordered-interval-envelope}
Let \(I_1=[\lambda_1..\rho_1],\ldots,I_K=[\lambda_K..\rho_K]\) be nonempty
non-nested integer intervals with \(\lambda_1\le\lambda_2\le\cdots\le\lambda_K\) and \(\rho_1\le\rho_2\le\cdots\le\rho_K\). 
Further, let \(c_1,\ldots,c_K\in\Z\). 
Define
\(
	f(z)=\max\{c_h : z\in I_h\},
\)
where \(f(z)\) is undefined if no interval contains \(z\). 
We can write \(f\) as a PAF with \(O(K)\) pieces and can construct $f$ in \(O(K)\) time and space when the intervals are given in order $I_1, \ldots, I_K$.
\end{lemma}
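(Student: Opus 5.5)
We will build the envelope with a sweep over the intervals in the given order, maintaining a monotone stack as in sliding-window maximum. The key observation is that the intervals are ordered simultaneously by left and right endpoints. For any point \(z\), the set \(\{h : z\in I_h\}\) is therefore a contiguous index range \([h_1(z)..h_2(z)]\), and both \(h_1(z)\) and \(h_2(z)\) are nondecreasing in \(z\). This follows because \(z\in I_h\) iff \(\lambda_h\le z\) and \(\rho_h\ge z\). The first condition holds for a prefix of indices and the second for a suffix, since the \(\lambda_h\) and \(\rho_h\) are sorted. Hence \(f(z)\) is a sliding-window maximum of the sequence \(c_1,\ldots,c_K\), where both window ends move monotonically as \(z\) increases.

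Concretely, we sweep \(z\) only over event points rather than over every integer. These are the values \(\lambda_h\) (a window end advances) and \(\rho_h+1\) (a window start advances). We merge the two sorted event lists in \(O(K)\) time. Between consecutive event points the window is fixed, so \(f\) is constant there; on gaps where the window is empty, \(f\) is undefined, and we emit no piece.

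We maintain a deque of indices whose \(c\)-values are strictly decreasing. When \(z\) reaches \(\lambda_h\), we push \(h\) after popping every back element with value \(\le c_h\). When \(z\) passes \(\rho_h\), we pop \(h\) from the front if it is still there. After processing all events at a given \(z\), the front of the deque (if any) gives the constant value on the stretch up to the next event. We then emit a constant piece \(([z..z_{\text{next}}-1],c_{\text{front}},0)\), merging it with the previous piece if the value is equal.

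Each index is pushed and popped at most once, so the deque work is \(O(K)\) amortized. There are at most \(2K\) events, hence at most \(2K\) emitted pieces, which gives the \(O(K)\) size and time bounds. The space is the deque plus the output, i.e.\ \(O(K)\).

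The step needing most care is justifying that expiring from the front is correct. When \(z>\rho_h\), every index \(h'<h\) also has \(\rho_{h'}\le\rho_h<z\) and has expired too, so expired indices always sit at the front of the deque. This front-expiry argument is exactly where the non-nested (co-sorted) hypothesis is used. Without it, an expired interval could lie behind a live one, and the envelope could have more pieces than \(O(K)\) in the deque-based construction.

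A minor point is treating intervals with equal endpoints and multiple events at the same \(z\) consistently. We process all right-expiries \(\rho_h+1\le z\) before emitting the piece that starts at \(z\), and all insertions with \(\lambda_h\le z\) before emitting as well.
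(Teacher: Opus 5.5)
Your proposal is correct and is essentially the paper's proof: the same sweep over the event positions $\lambda_h$ and $\rho_h+1$, the same deque with strictly decreasing values (back-popping dominated intervals on insertion, front-popping on expiry), and the same amortized $O(K)$ count. Your observation that the active set is a contiguous index window with monotone ends is a neat way to inherit the standard sliding-window-maximum correctness argument.

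One aside is inaccurate. Without the non-nested hypothesis the envelope still has at most $2K-1$ pieces, since it is constant between consecutive event positions. What fails is the safety of the deque removals, both the back-popping of dominated intervals and the front-expiry, not only the latter. One would then fall back to, e.g., a heap and $O(K\log K)$ time.
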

\begin{proof}
We sweep the integer line from left to right over all event positions of the form \(\lambda_h\) and \(\rho_h+1\). 
The deque \(D\) stores only intervals that have started, have not expired, and have not been removed as dominated, 
meaning that there is another later interval with at least as high a value and no earlier expiration event. 
We maintain the deque invariant that the intervals in \(D\) appear in their input order, 
and that their values are strictly decreasing from front to back.

At an event position \(z\), 
we first process all expiration events \(z=\rho_h+1\). 
When such an interval \(I_h\) expires, 
it may already have been removed from the back as dominated; in this case no action is needed. 
If \(I_h\) is still present in \(D\), then it is at the front after all earlier expiring intervals have been removed. 
Indeed, the intervals in \(D\) are in input order, and the right endpoints are nondecreasing. 
Hence, every interval before \(I_h\) in \(D\) has right endpoint at most \(\rho_h\), and has also expired by position \(z\). 
Thus expired intervals are removed from the front of \(D\) at their expiration positions.

After all expirations at \(z\) have been processed, we process all start events \(z=\lambda_h\), in input order. 
Before appending a starting interval \(I_h\) to the back of \(D\), we remove from the back every interval \(I_g\) with \(c_g\le c_h\). 
This is safe because \(I_g\) starts no later than \(I_h\), and the right endpoints are nondecreasing, so \(I_h\) expires no earlier than \(I_g\). 
Since \(c_h\ge c_g\), the interval \(I_g\) can never be the unique maximizer at the current or any later sweep position.

After these removals, we append \(I_h\) to the back. 
The deque invariant is preserved. 
Suppose, for a contradiction, that it is not. 
Before inserting \(I_h\), the invariant held by induction. 
Thus, any violation after the insertion must involve the newly appended interval \(I_h\). 
The input order is still preserved because intervals are processed and inserted in input order. 
Moreover, the right endpoints remain nondecreasing because they are nondecreasing in the input. 
Thus, the only possible violation is that the values are no longer strictly decreasing from front to back. 
Let \(I_g\) be the interval immediately preceding \(I_h\) in \(D\) after the insertion. 
Then such a violation implies \(c_g\le c_h\). 
But all intervals at the back with value at most \(c_h\) were removed before \(I_h\) was appended, a contradiction.

Between two consecutive event positions \(z<z'\), no interval starts or expires.
Hence, the active set is unchanged on the integer interval \([z..z'-1]\).
If \(D\) is empty, then \(f\) is undefined on this interval and no output piece is emitted. 
Otherwise, the maximum value is the value stored at the front of \(D\).
Hence, the active set is unchanged, 
and the maximum value is the value stored at the front of \(D\). 
Whenever this value changes, we start a new output interval. 
Each interval is inserted once, removed from the back at most once, 
and removed from the front at most once. 
Thus, the sweep takes \(O(K)\) time, uses \(O(K)\) space, and produces \(O(K)\) pieces.
\end{proof}

\begin{lemma}\label{lem:paf:max-operations}
Let \(f:[a..b]\to\Z\) be a $K$-PAF\@.
The following operations on $f$ result in an $O(K)$-PAF.
\begin{itemize}
	\item prefix maximum \( F_{\le}(z)=\max_{t\in[a..z]} f(t), \)
	\item suffix maximum \(F_{\ge}(z)=\max_{t\in[z..b]} f(t), \)
	\item fixed-length sliding-window maximum \( F_{\ell}(z)=\max_{t\in[z-\ell..z]\cap[a..b]} f(t) \)
\end{itemize}
Each of them has \(O(K)\) pieces. 
We can construct each of them in \(O(K)\) time and \(O(K)\) space.
\end{lemma}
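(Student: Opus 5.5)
The plan is to reduce all three operations to one primitive: the maximum of a single affine piece over a window. Combining these per-piece contributions with the upper envelope of \cref{lem:ordered-interval-envelope}, or a similar linear merge, then yields a PAF.

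\textbf{Prefix maximum.} Scan the pieces \(([a_h..b_h],c_h,d_h)\) from left to right, keeping the running maximum \(\mu\) of \(f\) over \([a..a_h-1]\). On piece \(h\), the function \(z\mapsto\max(\mu,\max_{t\in[a_h..z]}(c_h+d_ht))\) has a simple shape.
\begin{itemize}
	\item If \(d_h\le 0\), the inner maximum is the constant \(c_h+d_ha_h\).
	\item If \(d_h>0\), the inner maximum is the affine function \(c_h+d_hz\) itself.
\end{itemize}
Taking the maximum with the constant \(\mu\) splits the piece into at most two subpieces, and the breakpoint is found by one division. Afterwards we update \(\mu\) by the piece's maximum, which is attained at an endpoint. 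This gives at most \(2K\) pieces in \(O(K)\) time. The suffix maximum follows by applying the reflection \(z\mapsto a+b-z\) from \cref{lem:paf:basic-operations}, computing a prefix maximum, and reflecting back.

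\textbf{Sliding window.} For the window \([z-\ell..z]\), I would split by whether the window meets one piece or several.

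Case 1: the window is contained in a single piece \(h\), that is, \(z-\ell\ge a_h\) and \(z\le b_h\). The maximum is attained at an endpoint of the window. So \(F_\ell(z)\) equals \(c_h+d_hz\) if \(d_h\ge0\), and \(c_h+d_h(z-\ell)\) otherwise. On this range of \(z\) it is one affine piece.

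Case 2: the window crosses piece boundaries. Then the window consists of three parts:
\begin{itemize}
	\item a suffix of some piece \(h\), contributing \(\max_{t\in[z-\ell..b_h]}f(t)\), which is a suffix-maximum-type function of \(z-\ell\);
	\item a block of whole pieces \(h+1,\dots,g-1\), contributing a constant \(M_{h+1,g-1}\);
	\item a prefix of piece \(g\), contributing a prefix-maximum-type function of \(z\).
\end{itemize}
The set of \(z\) sharing the same pair \((h,g)\) is an interval, and as \(z\) increases both \(h\) and \(g\) move monotonically. Hence there are at most \(2K\) such \(z\)-intervals, which we enumerate with two pointers.

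\begin{itemize}
	\item The whole-piece maxima \(M_{h+1,g-1}\) are constants on ordered non-nested intervals: whole piece \(r\) is inside the window for \(z\in[b_r..a_r+\ell]\), and both endpoints are monotone in \(r\). So \cref{lem:ordered-interval-envelope} builds their envelope as an \(O(K)\)-PAF in \(O(K)\) time.
	\item The partial contributions from the first and last pieces, over each \((h,g)\)-interval, are taken from the suffix- and prefix-maximum PAFs already built, shifted by \(\ell\) where needed via \cref{lem:paf:basic-operations}.
\end{itemize}

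Finally, \(F_\ell\) is the pointwise maximum of a constant number of PAFs, each with \(O(K)\) pieces: the envelope, the shifted suffix maximum restricted to the ranges where it is valid, and the prefix maximum. We merge their breakpoints in one sweep.

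\textbf{Main obstacle.} Within each merged interval, every function is affine, and the maximum of \(O(1)\) affine functions on an interval adds only \(O(1)\) crossing points. I expect the real difficulty to be the bookkeeping that restricts each partial contribution to exactly the \(z\) where its piece is the boundary piece of the window. Each such function must be restricted to the union of its valid \((h,g)\)-intervals, which I would do during the same two-pointer sweep so that the total stays \(O(K)\). The pieces where \(f\) is undefined near \(a\) and \(b\) are handled by intersecting the window with \([a..b]\), which only clips the first and last \(\ell\) positions.
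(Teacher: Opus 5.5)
Your prefix- and suffix-maximum argument matches the paper's. You scan with a running maximum; on each piece the local prefix maximum is either the affine function or a constant endpoint value, giving at most two subpieces; the suffix case follows by reflection.

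\textbf{The sliding-window step needs a correction.} You cannot take the partial contributions of the first and last pieces ``from the suffix- and prefix-maximum PAFs already built''. The global functions $F_{\ge}(z-\ell)=\max_{t\in[z-\ell..b]}f(t)$ and $F_{\le}(z)=\max_{t\in[a..z]}f(t)$ range over points outside the window. Restricting the $z$-domain to an $(h,g)$-interval does not restrict the range of $t$: for strictly increasing $f$ and $z<b$ you would obtain $f(b)$ instead of $F_\ell(z)=f(z)$.

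Your decomposition actually needs the piece-local maxima $\max_{t\in[z-\ell..b_h]}f(t)$ and $\max_{t\in[a_g..z]}f(t)$. On a single affine piece these are just $\max(f(z-\ell),f(b_h))$ and $\max(f(a_g),f(z))$, so each contributes one affine or constant piece per $(h,g)$-interval. You should also discard the empty intervals $[b_r..a_r+\ell]$, which arise for pieces longer than $\ell$, before applying \cref{lem:ordered-interval-envelope}, since that lemma assumes nonempty intervals; the monotonicity of the remaining endpoints is unaffected. With these repairs your construction is correct and gives $O(K)$ pieces in $O(K)$ time.

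\textbf{Comparison with the paper.} After the repair, every candidate value in your construction is a window-endpoint value or a piece-endpoint value inside the window. The paper uses exactly this observation directly, instead of your first/middle/last decomposition. It writes $F_\ell$ as the pointwise maximum of three terms:
\begin{itemize}
\item $f(\max(z-\ell,a))$, an affine shift of $f$ that is constant near $a$;
\item $f(z)$;
\item the \cref{lem:ordered-interval-envelope} envelope of the constants $f(p)$ on $[p..p+\ell]\cap[a..b]$, over all piece endpoints $p$.
\end{itemize}
Because the window endpoints always lie inside the window, no term ever has to be restricted. The paper therefore needs no two-pointer enumeration of $(h,g)$-intervals, no separate single-piece case, and none of the restriction bookkeeping you identified as the main obstacle. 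Your route does give explicit formulas on each $(h,g)$-interval and a transparent count of at most $2K$ such intervals, but the paper's argument is shorter and avoids the case analysis entirely.
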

\begin{proof}
For the prefix maximum, scan the pieces of \(f\) from left to right while maintaining the largest value attained so far.
On one affine piece, the prefix maximum is the pointwise maximum of a constant value and the prefix maximum of the affine function restricted to this piece.
The latter is either the affine function itself or a constant endpoint value, and therefore their pointwise maximum has only \(O(1)\) subpieces on the current piece.
Thus, each input piece creates only \(O(1)\) output pieces.
	The suffix maximum is symmetric.

For the sliding-window maximum, fix a window length \(\ell\). 
For a window \([z-\ell..z]\cap[a..b]\), 
consider its intersection with one piece of \(f\). 
Since an affine function on an integer interval attains its maximum at one of the two endpoints of that interval, 
the maximum over the whole clipped window is attained either at an endpoint of the clipped window, 
or at an endpoint of one of the affine pieces of \(f\) that lies inside the clipped window.

The two clipped-window endpoint contributions have \(O(K)\) pieces: 
away from the boundary of \([a..b]\), 
they are affine changes of \(f\), 
and at the boundary they are constant. 
It remains to handle maxima attained at endpoints of affine pieces. 
Let \(p\) be such an endpoint. 
The value \(f(p)\) can contribute to \(F_{\ell}(z)\) exactly for those window positions \(z\) with
\( p\in[z-\ell..z], \)
that is, for
\( z\in[p..p+\ell]\cap[a..b]. \)
Thus, each piece endpoint gives one constant-valued contribution on an interval of \(z\)-values.

When the piece endpoints are processed from left to right in the domain of \(f\), the corresponding intervals
\(
	[p..p+\ell]\cap[a..b]
\)
have nondecreasing left endpoints and nondecreasing right endpoints. 
Therefore, the requirements of \cref{lem:ordered-interval-envelope} apply to these constant-valued contributions, 
and their upper envelope has \(O(K)\) pieces by  \cref{lem:ordered-interval-envelope}. 
Taking the pointwise maximum of this envelope with the two clipped-window endpoint
contributions gives an \(O(K)\)-piece representation of \(F_{\ell}\).
\end{proof}

\begin{lemma}\label{lem:paf:binary-operations}
Let \(f\) and \(g\) be PAFs on a common integer domain.
Their sum and their pointwise maximum can be computed by a sequential scan of
the common refinement of their sorted interval lists, in time linear in the
number of input pieces plus the number of output pieces.
If the domains are not identical, the same statement applies after restricting
both functions to their common domain.
\end{lemma}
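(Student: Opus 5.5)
The plan is a merge-style sweep over the two sorted interval lists. This handles the sum and the pointwise maximum together; the only part needing care is the crossing points of the maximum.

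\textbf{Step 1: domain restriction.} If the domains of \(f\) and \(g\) differ, first restrict both to their common integer interval using \cref{lem:paf:basic-operations}. This costs linear time in the input pieces and does not increase the piece counts. So assume a common domain \([a..b]\).

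\textbf{Step 2: common refinement.} Walk two pointers through the piece lists of \(f\) and \(g\), as in merging sorted lists. At each moment we hold a current piece \(([a_f..b_f],c_f,d_f)\) of \(f\) and \(([a_g..b_g],c_g,d_g)\) of \(g\). The current refinement interval is \([\max(a_f,a_g)..\min(b_f,b_g)]\). Afterwards, advance whichever pointer has the smaller right endpoint, or both if the right endpoints are equal. Each step consumes at least one endpoint, so the refinement has at most \(K_f+K_g-1\) intervals and is produced in \(O(K_f+K_g)\) time. On every refinement interval, both \(f\) and \(g\) are affine.

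\textbf{Step 3: the sum.} On each refinement interval, emit the piece with offset \(c_f+c_g\) and slope \(d_f+d_g\). Optionally, merge it with the previously emitted piece when the two affine expressions coincide. The work is constant per refinement interval.

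\textbf{Step 4: the maximum.} On a refinement interval \(I=[s..e]\), the difference \(h(x)=(c_f-c_g)+(d_f-d_g)x\) is affine, so it changes sign at most once.
\begin{itemize}
\item If \(d_f=d_g\), one function dominates on all of \(I\), and we emit a single piece.
\item Otherwise, compute the integer threshold \(x^*\) where \(h\) changes sign, using a floor or ceiling of \((c_g-c_f)/(d_f-d_g)\) in \(O(1)\) word-RAM arithmetic. Split \(I\) into \([s..x^*]\) and \([x^*+1..e]\), clipped to \(I\). On each part one function is the larger, so we emit at most two pieces.
\end{itemize}
Adjacent emitted pieces with identical \((c,d)\) are merged on the fly.

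\textbf{Step 5: time bound.} Every output piece is emitted in \(O(1)\) time. Every refinement interval costs \(O(1)\) time plus its emitted pieces. The total time is therefore \(O(K_f+K_g+K_{\mathrm{out}})\), which is linear in the input pieces plus the output pieces. In fact the output has at most \(2(K_f+K_g)\) pieces.

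\textbf{Main obstacle.} The delicate point is the integer crossing point in the maximum case. We must round correctly, so that the emitted pieces partition \(I\) exactly, with no empty or overlapping subintervals. We also handle ties consistently, for example preferring \(f\) when \(h(x)\ge 0\). To justify correctness, I would verify by the sign of \(d_f-d_g\) which side of \(x^*\) has \(h\ge 0\). Everything else is a routine merge.
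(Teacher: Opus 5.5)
Your proposal is correct and follows essentially the same route as the paper: scan the common refinement of the two sorted interval lists, add the affine expressions on each refined interval for the sum, and for the maximum use that two affine functions cross at most once, so each refined interval contributes at most one extra breakpoint. Your explicit rounding of the integer crossing threshold, the tie-handling, and the resulting bound of at most $2(K_f+K_g)$ output pieces fill in details that the paper's proof leaves implicit.
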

\begin{proof}
For the sum, scan the common refinement of the two interval decompositions and
add the two affine expressions on each refined interval.
For the pointwise maximum, scan the same common refinement.
On each refined interval, the maximum of two affine functions has at most one
crossing point, and therefore at most one additional breakpoint on the integer domain.
Hence, both constructions are output-sensitive linear in the stated sense.
\end{proof}

\subsection{Computing block-boundary values}

Consider one run block $\boxij$ of dimensions $x_i\times y_j$.
Let $f_T:[0..y_j]\to\Z$ be a $K_T$-PAF of the top input block-boundary and let
$f_L:[0..x_i]\to\Z$ be a $K_L$-PAF of the left input block-boundary.
Recall the definition of $\tau_{i,j}$ in \cref{eq:def:tau}.
Ranges over $t$ are always interpreted as subsets of $[0..y_j]$, and ranges
over $r$ as subsets of $[0..x_i]$; empty ranges contribute no candidate.

Let $f_B:[0..y_j]\to\Z$ be the bottom-boundary score array of the block, given
by $f_B(\beta)=B[\beta]$ with $B$ defined in \cref{eq:def:B}. Let
$f_R:[0..x_i]\to\Z$ be the right-boundary score array of the block, given by
$f_R(\alpha)=R[\alpha]$ with $R$ defined in \cref{eq:def:R}.

\begin{lemma}\label{lem:compressed-block-transfer}
The score arrays $f_B$ and $f_R$ can be represented as PAFs with
$O(K_T+K_L+1)$ pieces each. Moreover, they can be constructed together in
\begin{equation*}
	O(K_T+K_L+K_B+K_R+1)
\end{equation*}
time, where $K_B$ and $K_R$ are the numbers of pieces in $f_B$ and $f_R$,
respectively.
\end{lemma}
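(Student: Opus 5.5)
We retrace the proof of \cref{lem:boundary-acceleration} and replace every array operation by the corresponding PAF operation from \cref{lem:paf:basic-operations,lem:paf:max-operations,lem:paf:binary-operations}. By the symmetry between the two output sides, it suffices to treat $f_B$. By \cref{eq:def:B}, $f_B$ is the pointwise maximum of a top contribution $A(\beta)=\max_{t\in[0..\beta]}(f_T(t)+\tau_{i,j}(x_i,\beta-t))$ and a left contribution $C(\beta)=\max_{r\in[0..x_i]}(f_L(r)+\tau_{i,j}(x_i-r,\beta))$. We show that each is an $O(K_T+K_L+1)$-PAF and can be built in time linear in its size.

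\textbf{Top contribution.} Put $d=\beta-x_i$. Following \cref{eq:top-contribution-nonequal,eq:top-contribution-equal}, $A$ is the maximum of two terms.
\begin{itemize}
	\item The first term is a prefix maximum of $g_1(t)=f_T(t)+t$, evaluated at $\min(\beta,\beta-x_i)=\beta-x_i$, plus an affine function of $\beta$.
	\item The second term is a sliding-window maximum, with window length $x_i$, of $g_2(t)=f_T(t)-\kappa t$ (where $\kappa\in\{1,w+1\}$), plus an affine function of $\beta$.
\end{itemize}
Each $g_k$ is a $K_T$-PAF by \cref{lem:paf:basic-operations}. \Cref{lem:paf:max-operations} gives the prefix and sliding maxima as $O(K_T)$-PAFs in $O(K_T)$ time. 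The shift $\beta\mapsto\beta-x_i$, restriction to the valid range (an empty range gives $-\infty$, i.e.\ no pieces), and addition of the affine term again cost $O(K_T)$ by \cref{lem:paf:basic-operations}. Finally, \cref{lem:paf:binary-operations} combines the two terms; the crossings add at most one breakpoint per refined interval, so the result stays $O(K_T)$.

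\textbf{Left contribution.} Here $\beta$ is the free variable and $r$ ranges over all of $[0..x_i]$. Write $u=x_i-r$, so $\tau_{i,j}(u,\beta)$ splits according to whether $u\le\beta$ or $u>\beta$, that is, $r\ge x_i-\beta$ or $r<x_i-\beta$.
\begin{itemize}
	\item For $b_i\ne c_j$, $C(\beta)$ is the maximum of two parts:
	\begin{itemize}
		\item $\max_{r\ge x_i-\beta}(f_L(r)-r)+x_i-\beta$, a suffix maximum of $f_L(r)-r$ read at $x_i-\beta$;
		\item $\max_{r<x_i-\beta}(f_L(r)+r)-x_i+\beta$, a prefix maximum read at $x_i-\beta-1$.
	\end{itemize}
	\item For $b_i=c_j$ the two parts are analogous, with slopes $w+1$ replacing $1$ where $\min(u,\beta)=u$.
\end{itemize}
Thus $C$ is obtained from $O(1)$ prefix and suffix maxima of $f_L$ plus affine functions (\cref{lem:paf:max-operations}), composed with the reflection $\beta\mapsto x_i-\beta$ (\cref{lem:paf:basic-operations}). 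The clamping is handled by extending with constants: for $\beta\ge x_i$ the first part becomes the global maximum of $f_L(r)-r$, with an affine correction, so at most $O(1)$ extra pieces appear. Hence $C$ is an $O(K_L+1)$-PAF built in $O(K_L+1)$ time.

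\textbf{Combining.} The pointwise maximum $f_B=\max(A,C)$ is computed by \cref{lem:paf:binary-operations} in time linear in $K_T+K_L+K_B$. The treatment of $f_R$ via \cref{eq:def:R} is symmetric, with the roles of $f_T$ and $f_L$ exchanged. The $O(1)$ additive term covers the constant-size overhead of empty or degenerate ranges.

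\textbf{Main obstacle.} The delicate step is the piece-count bound for $\max(A,C)$. \Cref{lem:paf:binary-operations} only promises output-sensitive time and not $O(K_T+K_L+1)$ output pieces. I would argue that the common refinement has $O(K_T+K_L+1)$ intervals, and that on each refined interval the maximum of two affine functions introduces at most one new breakpoint. Therefore $K_B,K_R=O(K_T+K_L+1)$, which also makes the time bound $O(K_T+K_L+K_B+K_R+1)$ consistent.
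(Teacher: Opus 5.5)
Your proposal is correct and matches the paper's proof. You treat the top contribution through the prefix and sliding-window maxima of \cref{eq:top-contribution-nonequal,eq:top-contribution-equal}, and the left contribution through prefix and suffix maxima split at $r=x_i-\beta$, followed by a reflection of the output coordinate. You then take a pointwise maximum and handle $f_R$ symmetrically; your explicit common-refinement count for $\max(A,C)$ only spells out what the paper uses implicitly when it deduces $O(K_T+K_L+1)$ pieces from \cref{lem:paf:binary-operations}.
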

\begin{proof}
We first consider the bottom-boundary array $f_B$. The contribution from the
top boundary is exactly the array $A[\beta]$ from \cref{eq:top-contribution},
with $T$ represented by the PAF $f_T$. Its two cases are given in
\cref{eq:top-contribution-nonequal,eq:top-contribution-equal}. These formulas
use only affine changes of the argument, addition of affine functions, prefix
maxima, sliding-window maxima, and pointwise maxima applied to $f_T$. Hence,
this top-boundary contribution has $O(K_T+1)$ pieces by
\cref{lem:paf:basic-operations,lem:paf:max-operations,lem:paf:binary-operations}.

The contribution from the left boundary in \cref{eq:def:B} is handled by the
same derivation with the left-boundary variable $r$ in place of the top-boundary
variable $t$. For the bottom boundary, this contribution is split according to
whether $r\le x_i-\beta$ or $r\ge x_i-\beta$. Thus, it is expressed by
prefix and suffix maxima of affine shifts of $f_L$, followed by affine changes
of the output coordinate. It uses only the same PAF operations applied to
$f_L$, and therefore has $O(K_L+1)$ pieces. Taking the pointwise maximum of the
top- and left-boundary contributions gives $f_B$. Thus $f_B$ has
$O(K_T+K_L+1)$ pieces.

The proof for the right-boundary array $f_R$ is symmetric, using
\cref{eq:def:R}: the contribution from the top boundary has $O(K_T+1)$ pieces,
the contribution from the left boundary has $O(K_L+1)$ pieces, and their
pointwise maximum gives $f_R$ with $O(K_T+K_L+1)$ pieces.

The construction of $f_B$ and $f_R$ consists of a constant number of sequential
PAF operations and pointwise maxima. By the output-sensitive bounds in
\cref{lem:paf:binary-operations}, the total time is
$O(K_T+K_L+K_B+K_R+1)$.
\end{proof}

We can now iterate the computation of \cref{lem:compressed-block-transfer} horizontally and vertically to obtain global complexities.
We do not apply the following results on the global input strings $X$ and $Y$, but rather on rectangular subproblems emerging during the recursive divide-and-conquer computation in the Hirschberg algorithm.

\begin{lemma}\label{lem:compressed-row-transfer}
We can compute a row-boundary score array represented as a function of the second coordinate by a row-wise sweep using \(O(nm)\) working space. 
During such a computation, every active horizontal score-array segment restricted to one run interval of $Y$ has \(O(n)\) pieces.
\end{lemma}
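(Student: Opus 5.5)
We prove \cref{lem:compressed-row-transfer} by running the strip-by-strip sweep of \cref{lem:space}, with every score array stored as a PAF instead of explicitly. The key invariant is that the segment of the horizontal boundary $f(p_i,\cdot)$ restricted to run interval $[q_j..q_{j+1}]$ of $Y$ has at most $c\cdot i$ pieces for a suitable constant $c$. The corresponding vertical segment $f(\cdot,q_j)$ on run interval $[p_i..p_{i+1}]$ satisfies the analogous bound $c\cdot j$.

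\textbf{Base case.} Row $0$ is $\beta\mapsto-\beta$, a single affine piece on each run interval of $Y$. Column $0$ is $\alpha\mapsto-\alpha$, likewise a single affine piece on each run interval of $X$.

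\textbf{Inductive step.} Block $\boxij$ is processed by \cref{lem:compressed-block-transfer} with inputs $f_T$ (the top segment, $K_T=O(i)$) and $f_L$ (the left segment, $K_L=O(j)$). A naive application gives $K_B=O(K_T+K_L+1)$. Iterating this bound loses a constant factor per step, which would blow up geometrically. This is the main obstacle: we need an \emph{additive} rather than multiplicative bound.

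To get it, we refine the accounting inside the proof of \cref{lem:compressed-block-transfer} into two parts.

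First, the top contribution $A$. It is built by prefix maxima, sliding-window maxima and affine shifts of $f_T$. Each breakpoint of $A$ is either the image of a breakpoint of $f_T$ or one of $O(1)$ new breakpoints: crossings and window-clipping points. So $A$ has at most $K_T+O(1)$ pieces.

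Second, the left contribution to the bottom side. Here $r$ ranges over the entire left side, and for fixed $\beta$ the value is $\max_r(L[r]+\tau_{i,j}(x_i-r,\beta))$. Since $\tau_{i,j}(x_i-r,\beta)$ splits into two affine regimes according to whether $r\le x_i-\beta$, this is a maximum of a prefix-max and a suffix-max of affine shifts of $f_L$, each evaluated at the single point $x_i-\beta$, plus affine terms in $\beta$. As a function of $\beta$ it therefore has only $O(1)$ pieces on the bottom side: the dependence on $\beta$ enters through a threshold that moves monotonically, combined with an affine term.

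Hmm — that is not quite right, since the threshold sweeps across $f_L$'s breakpoints. Still, the threshold covers only $\beta\in[0..\min(x_i,y_j)]$, and the resulting bound is $K_L+O(1)$ rather than $O(1)$. The fix is to argue at the level of pieces of the final maximum: in the max of $A$ and the left contribution, each breakpoint originates from a breakpoint of $f_T$, from one of $f_L$, or from $O(1)$ crossings per block.

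Tracking origins along row $p_{i+1}$ run by run then yields a charging argument. Every breakpoint on row segment $(i+1,j)$ is charged either to a breakpoint of the top segment (same column run) or to one of $O(1)$ breakpoints created in block $\boxij$. The breakpoints of $f_L$ are not transported down to the bottom side in this charging scheme, because along the bottom output they collapse into the regime-split.

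With this charging, the per-segment count grows by $O(1)$ per row, which gives $O(i)\subseteq O(n)$ pieces on each horizontal segment. If the collapse claim fails, we would instead bound breakpoints by the $O(1)$ origin points per block lying in the block's "shadow" above the segment, of which there are $O(n)$.

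\textbf{Space.} At any moment we keep one horizontal row, which is $m$ segments of $O(n)$ pieces each, plus the current vertical segment of $O(m)$ pieces and $O(1)$ temporaries of size $O(n+m)$ from \cref{lem:paf:max-operations,lem:paf:binary-operations}. This totals $O(nm)$ words.
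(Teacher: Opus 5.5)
You are right that iterating $K_B=O(K_T+K_L+1)$ inside an induction loses a constant factor per block. That is exactly what the paper's proof does: it pushes the invariant ``$O(i)$ pieces'' for horizontal \emph{and} vertical segments through \cref{lem:compressed-block-transfer}. Your repair, however, breaks at the claim that breakpoints of $f_L$ ``collapse'' and are not transported to the bottom side. They are transported. In an equal-character block, the all-diagonal path from $(p_i+r,q_j)$ to $(p_{i+1},q_j+x_i-r)$ has score $L[r]+w(x_i-r)$. Two monotonicity facts hold: $L[r]+r$ is nondecreasing (append a deletion), and $L[r]-(w+1)r$ is nonincreasing (one more character of $X$ raises $f$ by at most $w+1$). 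Together they make the left contribution to $f_B(\beta)$ equal exactly $L[x_i-\beta]+w\beta$ for $\beta\le\min(x_i,y_j)$. Whenever this beats the top contribution, $f_B$ is a reflected copy of $f_L$ and inherits every breakpoint. You noticed this yourself (``the threshold sweeps across $f_L$'s breakpoints'') and then asserted the opposite. The ``shadow'' fallback has no argument behind it.

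More fundamentally, no charging scheme can give a per-segment bound in terms of $n$, because that bound is false. Take $X=\texttt{a}^{x_1}\texttt{b}^{x_2}$ (so $n=2$) and $Y=\texttt{b}\,\texttt{c}^{g_1}\,\texttt{b}\,\texttt{c}^{g_2}\cdots\texttt{b}\,\texttt{c}^{g_k}\,\texttt{b}^{y}$ (so $m=2k+1$). Let $g_h$ alternate between $1$ and $2$, and let $x_1\ge q_m$, $x_2\ge k$, $y\ge x_2$. For $\beta\in[x_2-k..x_2]$, an optimal path to $(\N,q_m+\beta)$ must match all $x_2$ copies of $\texttt{b}$ in $X$. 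Replacements can only pair $\texttt{a}$'s with characters before the first matched $\texttt{b}$ of $Y$, so the path matches the last $x_2$ $\texttt{b}$'s and
\[
	f(\N,q_m+\beta)=(w+1)x_2-x_1-q_m-\beta+2\pi_{x_2-\beta},
\]
where $\pi_s$ is the number of characters preceding the $s$-th last $\texttt{b}$ of $Y[1..q_m]$, and $\pi_0=q_m$. Consecutive differences are $1+2g_h\in\{3,5\}$ and alternate. Hence this segment of the final row on the last run of $Y$ has at least $\lceil (k+1)/2\rceil=\Omega(m)$ pieces, although $n=2$.

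The same reasoning gives $f(x_1+r,q_m)=(w+1)r-x_1-q_m+2\pi_r$ for $r\in[0..k]$. So a vertical segment already has $\Omega(m)$ pieces in row $i=2$. Swapping $X$ and $Y$ yields a vertical segment at column $q_3$ with $\Omega(n)$ pieces, which refutes your $c\cdot j$ invariant as well. What can plausibly be saved is only the $O(nm)$ working-space claim. That would need a global count over the sweep frontier, something like $K_B+K_R\le K_T+K_L+O(1)$ per block, which lets pieces migrate between horizontal and vertical sides. Multiplying per-segment bounds, as your space paragraph does, cannot establish it.
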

\begin{proof}
Our row-wise computation processes the run blocks in increasing order of the runs of $X$. 
At any time we store the previous row of horizontal score-array segments, 
the already constructed part of the current row, 
and the current vertical score-array segment used as the left input score array of the next block.

We prove the following invariant for row \(i\): 
While processing the \(i\)-th run of $X$, 
every active horizontal score-array segment restricted to one run interval of $Y$ has \(O(i)\) pieces, 
and every active vertical score-array segment restricted to the current run interval of the first string has \(O(i)\) pieces.

For \(i=1\), 
the top boundary and the left boundary of the dynamic-programming table are affine on each run interval, 
so the invariant holds. 
Suppose that the invariant holds before processing block \((i,j)\). 
The top score array of this block is a horizontal segment from the previous row, 
and therefore has \(O(i)\) pieces on the current run interval of $Y$. 
The left score array is either the left boundary of the whole table, 
or the right score array produced by block \((i,j-1)\); 
by the invariant, it has \(O(i)\) pieces on the current run interval of $X$. 
By \Cref{lem:compressed-block-transfer}, 
the bottom and right output score arrays of block \((i,j)\) have \(O(i)\) pieces.
Thus, the invariant is preserved while scanning the row. 
After row \(i\) has been processed, 
the bottom score arrays become the top score arrays for row \(i+1\),
and they have \(O(i)\subseteq O(i+1)\) pieces. 
The invariant follows by induction.

Consequently, after all \(n\) rows have been processed, the desired horizontal score array has \(O(n)\) pieces on each of the \(m\) run intervals of the second string. 
Its total representation size is \(O(nm)\). 
The active score arrays stored by the row-wise sweep have the same asymptotic size, and hence the working space is \(O(nm)\).
\end{proof}

The vertical statement follows by exchanging the roles of $X$ and $Y$, and using a column-wise sweep.

\begin{corollary}\label{cor:compressed-column-transfer}
We can compute a column-boundary score array represented as a function of the first coordinate by a column-wise sweep using \(O(nm)\) working space. 
During such a computation, every active vertical score-array segment restricted to one run interval of the first string has \(O(m)\) pieces.
\end{corollary}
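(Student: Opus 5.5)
The plan is to mirror the proof of \cref{lem:compressed-row-transfer} with the roles of \(X\) and \(Y\) exchanged. The column-wise sweep processes run blocks in increasing order of the runs of \(Y\). Within the \(j\)-th column strip it processes the blocks \(\square_{1,j},\square_{2,j},\ldots,\square_{n,j}\) from top to bottom. At any time it stores three things: the previous column of vertical score-array segments (the right output arrays of the strip \(j-1\), one PAF per run interval of \(X\)); the already constructed part of the current column; and the current horizontal segment, which serves as the top input of the next block. After block \(\square_{i,j}\) is processed, its bottom output \(f_B\) becomes the top input of \(\square_{i+1,j}\), exactly as \(R\) was passed rightward in the row sweep.

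The core is an induction on \(j\) with the following invariant. While strip \(j\) is being processed, every active vertical segment restricted to one run interval of \(X\) has \(O(j)\) pieces, and every active horizontal segment restricted to the current run interval of \(Y\) has \(O(j)\) pieces.

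For the base case \(j=1\), the left boundary \(f(\alpha,0)=-\alpha\) and the top boundary \(f(0,\beta)=-\beta\) are affine on each run interval. The same holds for the boundary of any rectangular subproblem arising in the recursion, since such a boundary is an affine function of the displacement from the corner.

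For the inductive step, take block \(\square_{i,j}\). Its left input is a segment from the previous column, with \(O(j)\) pieces by the invariant. Its top input is either the table's top boundary, which has \(O(1)\) pieces, or the bottom output of \(\square_{i-1,j}\). \Cref{lem:compressed-block-transfer} then gives outputs with \(O(K_T+K_L+1)\) pieces.

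The main obstacle is this step. Read naively, the bound \(O(K_T+K_L+1)\) lets the hidden constant compound as the top input is chained down the column, so the piece count of the vertical output would not visibly stay at \(O(j)\). I would handle it exactly as the row version implicitly does. In \cref{lem:compressed-block-transfer} the contribution of \(f_T\) to \(f_R\) comes from a prefix or suffix maximum followed by an affine shift. That collapses to \(O(1)\) pieces on the output side, because after restriction to a single \(t\)-range the maximum over \(t\) is a constant. Hence the right output has \(O(K_L+1)\) pieces, i.e. \(O(j)\), plus an additive constant per strip rather than a multiplicative one. Similarly, the bottom output passed downward inherits only \(O(K_T+K_L+1)\) pieces on one run interval of \(Y\), and this is reset at each strip. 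If this finer accounting turns out to be delicate, the fallback is to appeal to the stated symmetry: transposing the grid swaps \(X\) and \(Y\), and the transfer function \(\tau_{i,j}\) in \cref{eq:def:tau} is symmetric in \((u,v)\). So \cref{lem:compressed-row-transfer} applied to \((Y,X)\) yields the claim verbatim.

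Finally, after all \(m\) strips, each of the \(n\) run intervals of \(X\) carries a PAF with \(O(m)\) pieces, giving total size \(O(nm)\). The stored previous column, the current column, and one horizontal segment have the same asymptotic size, so the working space is \(O(nm)\).
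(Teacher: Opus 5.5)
Your skeleton is the paper's proof. The paper mirrors \cref{lem:compressed-row-transfer} with the invariant that, while column $j$ is processed, every active vertical segment on one run interval of $X$ has $O(j)$ pieces, and then counts $n$ intervals times $O(m)$ pieces. Your fallback (transpose the grid, use $\tau_{i,j}(u,v)=\tau_{i,j}(v,u)$, apply \cref{lem:compressed-row-transfer} to $(Y,X)$) is exactly that argument. What fails is the ``finer accounting'' you present as the fix for the compounding obstacle; you should drop it.

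The claim that the contribution of $f_T$ to $f_R$ collapses to $O(1)$ pieces is false. Consider
\[
	R_{\mathrm{top}}(\alpha)=\max_{t\in[0..y_j]}\bigl(T[t]+\tau_{i,j}(\alpha,y_j-t)\bigr).
\]
The range of $t$ is fixed, but the switch of $\tau_{i,j}$ between its two affine regimes occurs at $t=y_j-\alpha$, which moves with $\alpha$. Each regime therefore gives a prefix or suffix maximum of an affine shift of $f_T$, evaluated at the $\alpha$-dependent point $y_j-\alpha$. That is a reparametrized PAF with up to $\Theta(K_T)$ pieces. The maximum over $t$ is a constant only for $\alpha>y_j$, where the whole range falls into one regime. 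This is not an artifact of generic PAFs. For genuine DP rows, $-1\le T[t]-T[t-1]\le w+1$ (append an insertion; conversely, delete or replace the step consuming the last character of $Y$). Hence $T[t]+t$ is nondecreasing and $T[t]-(w+1)t$ is nonincreasing. In an equal-character block both regimes then evaluate to $R_{\mathrm{top}}(\alpha)=T[y_j-\alpha]+w\alpha$ for $\alpha\le\min(x_i,y_j)$, which copies every piece of $T$ on a window of that length. Likewise, ``reset at each strip'' does nothing against growth \emph{within} a strip, where the horizontal segment is chained through $n$ blocks.

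So your patch does not control the constant in $O(K_T+K_L+1)$, and your proof stands exactly where the paper's does. The concern itself is legitimate: the paper's inductive step for \cref{lem:compressed-row-transfer} also just cites \cref{lem:compressed-block-transfer} and never addresses this compounding. Genuinely closing it would require a structural bound on the piece counts of actual DP boundary arrays, not the generic per-block bound.
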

\begin{proof}
	We process the blocks column-by-column
	while maintaining the invariant that, 
	while processing column \(j\), 
	every active vertical score array segment restricted to one run interval of $X$ has \(O(j)\) pieces. 
	Hence, a vertical output score array has \(O(m)\) pieces on each of the \(n\) run intervals of $X$, 
	and therefore has \(O(nm)\) pieces in total. 
	The working space is \(O(nm)\).
\end{proof}

\begin{lemma}\label{lem:global-score-array-complexity}
We can compute the optimum score in \(O(mN+nM)\) time within \(O(nm)\) working space.
\end{lemma}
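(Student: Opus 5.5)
We compute the score with the block-boundary recurrence, but we store boundary values as PAFs instead of explicit arrays. We sweep row-wise as in \cref{lem:compressed-row-transfer} or column-wise as in \cref{cor:compressed-column-transfer}, whichever is more convenient. In either sweep, every block update is carried out by \cref{lem:compressed-block-transfer}, and the answer is the value of the last horizontal segment at \(\M\).

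\textbf{Correctness.} Correctness is the same as in \cref{thm:main}. \Cref{lem:compressed-block-transfer} produces exactly the functions \(f_B\) and \(f_R\) defined by \cref{eq:def:B,eq:def:R}, only in a different representation. The bottom and right values of each block are therefore the true values \(f\) on the run boundaries. A point shared by two adjacent blocks gets the maximum of both candidates. Under the fixed-partition convention stated for PAFs, this costs one pointwise maximum on a single point, so it changes piece counts by at most a constant factor.

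\textbf{Space.} In the row-wise sweep we keep one horizontal PAF per run interval of \(Y\). By \cref{lem:compressed-row-transfer}, each has \(O(i)\subseteq O(n)\) pieces. We also keep the current vertical PAF of \(O(n)\) pieces and the temporaries of \cref{lem:compressed-block-transfer}, which are linear in input plus output pieces. This gives \(O(nm)\) working space in total.

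\textbf{Time.} For block \((i,j)\), \cref{lem:compressed-block-transfer} costs \(O(K_T+K_L+K_B+K_R+1)\), and all four quantities are \(O(i)\) by the invariant. Summing naively gives \(O(\sum_i i\cdot m)=O(n^2m)\), which is too much; this is the main obstacle. The fix is to also use the trivial bound that a PAF on a domain of size \(k\) has at most \(k+1\) pieces. Hence \(K_T,K_B\le y_j+1\) and \(K_L,K_R\le x_i+1\), so each block costs \(O(\min(i,y_j)+\min(i,x_i)+1)\le O(x_i+y_j)\). Summing over all \(nm\) blocks gives \(O(m\N+n\M)\), matching \cref{thm:main}.

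The one point to check is that the constructions inside \cref{lem:compressed-block-transfer}, namely prefix maxima, sliding windows and the deque envelope of \cref{lem:ordered-interval-envelope}, never create intermediate PAFs larger than the domain size. This holds because every intermediate PAF lives on a subinterval of \([0..y_j]\) or \([0..x_i]\), and maximal pieces are nonempty disjoint integer intervals. So we can run the constructions as stated and, if needed, merge equal adjacent pieces after each step, which keeps every intermediate size at most the domain size.
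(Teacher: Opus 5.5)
Your proposal is correct and takes the same approach as the paper: you get the space bound from the row-wise (or column-wise) PAF sweep of \cref{lem:compressed-row-transfer,cor:compressed-column-transfer}, and the time bound from the per-block cost $O(K_T+K_L+K_B+K_R+1)$ of \cref{lem:compressed-block-transfer}, combined with the fact that a PAF on an interval of length $y_j$ (resp.\ $x_i$) has at most $y_j+1$ (resp.\ $x_i+1$) pieces, which sums to $O(m\N+n\M)$. Your remarks on shared corner points and on the sizes of intermediate PAFs fill in details the paper leaves implicit, and they do not change the argument.
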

\begin{proof}
	By \cref{lem:compressed-row-transfer,cor:compressed-column-transfer}, 
	we can compute the horizontal and vertical output score arrays in \(O(nm)\) working space.
It remains to bound the time. Consider one block \((i,j)\), of dimensions
\(x_i\times y_j\). Let \(K_T,K_L,K_B,K_R\) be the numbers of pieces in its top,
left, bottom, and right score arrays. By
\Cref{lem:compressed-block-transfer}, the transfer through the block takes
\( O(K_T+K_L+K_B+K_R+1) \)
time. 
Since a piecewise-affine function on an integer interval of length \(y_j\) (resp.\ \(x_i\)) has at most \(y_j+1\) pieces (resp.\ \(x_i+1\) pieces), 
this is
\( O(x_i+y_j+1). \)
Summing over all blocks gives
\[
	\sum_{i=1}^{n}\sum_{j=1}^{m} O(x_i+y_j+1) = O(mN+nM+nm) = O(mN+nM),
\]
because \(N\ge n\) and \(M\ge m\). This completes the proof.
\end{proof}

Since \cref{lem:global-score-array-complexity} can be applied on any subproblem,
plugging the compressed score-array computation into \cref{thm:output-alignment} gives the following result.

\begin{theorem}\label{thm:output-alignment-rlespace}
A minimum-cost LCS alignment of two nonempty RLE strings can be output in $O(m\N+n\M)$ time and $O(nm)$ working space, excluding the output.
In particular, the LCS string induced by such a minimum-gap-cost LCS alignment can be output within the same bounds.
\end{theorem}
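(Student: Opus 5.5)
We reuse the recursive reconstruction from the proof of \cref{thm:output-alignment} unchanged. The only substitution is inside each recursive call: the forward and backward score vectors along the separator are now produced by the compressed block-boundary sweep of \cref{lem:global-score-array-complexity} instead of the cell-based sweep of \cref{lem:space}. For a subproblem $(X',Y')$ with $n',m'\ge 2$ and $m'N'\ge n'M'$, we split $Y'$ at a run boundary $\beta$ into parts with $m_1,m_2\le\lceil m'/2\rceil$ runs. We then run the column-wise sweep of \cref{cor:compressed-column-transfer} on $(X',Y'[1..\beta])$ and on the reversed suffixes. Each sweep outputs the vertical score array at column $\beta$ as a PAF with $O(m_1)$ (resp.\ $O(m_2)$) pieces per run interval of $X'$, so $O(n'm')$ pieces in total. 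The symmetric case uses the row-wise sweep of \cref{lem:compressed-row-transfer}. The time per call is $O(m_1N'+n'\beta)+O(m_2N'+n'(M'-\beta))=O(\chi)$, as before.

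\textbf{Finding the crossing point.} We must choose $\alpha\in\argmax_\gamma(\textsf{fwd}[\gamma]+\textsf{bwd}[\gamma])$ without expanding the vectors to length $N'$. The backward PAF lives on the reversed coordinate, so we first reflect it with $z\mapsto N'-z$ (\cref{lem:paf:basic-operations}). Adding the two PAFs by a merge scan (\cref{lem:paf:binary-operations}) gives a PAF with $O(n'm')$ pieces. An affine function on an integer interval is maximized at an endpoint, so scanning the pieces and evaluating both endpoints of each yields $\alpha$ in $O(n'm')\subseteq O(\chi)$ time and $O(n'm')$ space. The correctness argument, that some optimal path crosses at $(\alpha,\beta)$, is verbatim from \cref{lem:hirschberg}. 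The time analysis of \cref{thm:output-alignment} also carries over unchanged: the children have total cost at most $\tfrac78\chi$, so the total time is $O(m\N+n\M)$. Base cases ($n'=1$ or $m'=1$, or an empty string) are handled by \cref{lem:single-run-base} in $O(1)$ working space.

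\textbf{Space, the main obstacle.} The recursion stack is the hard part, since \cref{thm:output-alignment} only bounded it by $O(\N+\M)$. We would discard all PAFs before recursing, so only $O(1)$ words (endpoints, offsets and the swap flag) remain per stack frame. The stack depth must then be $O(nm)$. We would argue that every non-base call has $n',m'\ge 2$ and $m'N'\ge n'M'$ (after swapping), and that its split strictly decreases the number of runs of the string being cut. Each child has at most $n'+1$ runs on the other string, and in the split string $m_i\le\lceil m'/2\rceil<m'$ runs. The potential $n'm'$ is not obviously monotone, because of the $+1$ from a shared run. Our first attempt is to use the geometric decay of $\chi$ instead: child cost is at most $\tfrac78$ of the parent's, and $\chi\le 2\N\M$, so the depth is $O(\log(\N\M))$. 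That is $O(\log(\N+\M))$, which is in $O(nm)$ only if we additionally argue $\log(\N+\M)$ words fit in the $O(nm)$ budget. If that comparison fails for long runs, we instead bound the depth via run counts: along any root-to-leaf path, a run-cutting split halves $m'$ or $n'$ while the other count grows by at most one, so the depth is $O(\log n+\log m)\subseteq O(nm)$. The simultaneous working space at any moment is therefore the current level's PAFs, $O(n'm')\le O(nm)$, plus this $O(nm)$ stack. The statement about the LCS string follows, as before, by emitting the equal-pair steps of the produced alignment.
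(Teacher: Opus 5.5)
Your proposal follows the paper's proof essentially step for step. The shared steps are: PAF separator arrays from the row/column sweeps with $O(n'm_1+n'm_2)$ pieces, reflection of the backward array, a linear scan of the sum for the maximizer, the unchanged $\tfrac78$ recurrence, and an $O(\log n+\log m)$ recursion stack. The one real flaw is your justification of that stack depth.

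You assert that a child may have $n'+1$ runs on the unsplit string. From this you conclude that, since the split count halves while the other grows by at most one, the depth is $O(\log n+\log m)$. That inference does not hold. Run counts $(2,3)\to(3,2)\to(2,3)\to\cdots$ are consistent with your premise: halving $3$ gives $\lceil 3/2\rceil=2$, and the other count goes from $2$ to $3$. Under that premise the depth would be bounded only through the lengths, i.e.\ by your $O(\log(\N\M))$ fallback, which you rightly note need not fit in $O(nm)$.

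The fix is that your premise is too pessimistic. In the case $m'N'\ge n'M'$, each child's part of $X'$ is a prefix or a suffix of $X'$, so individually it has at most $n'$ runs. The $+1$ from the shared run appears only in the sum $n_1+n_2\le n'+1$, which is used for the time bound. So along every root-to-leaf path both run counts are non-increasing, and at every non-base call the split count drops to at most $\lceil m'/2\rceil$ (resp.\ $\lceil n'/2\rceil$). Hence there are at most $\lceil\log_2 n\rceil+\lceil\log_2 m\rceil+1$ levels, each frame holding $O(1)$ words.

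The same monotonicity is also what you (and the paper) need for the per-call bound $O(n'm')\subseteq O(nm)$ on the stored PAFs. The paper simply asserts the $O(\log n+\log m)$ stack bound. With this one-line correction, your argument supplies the justification the paper omits.
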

\begin{proof}
We use the same Hirschberg-style reconstruction as in the proof of \Cref{thm:output-alignment}. 
The only difference is how the score arrays used for choosing the crossing point are represented and computed.

Consider one recursive subproblem with decoded lengths \(N'\) and \(M'\), and
with \(n'\) and \(m'\) runs. 
In the proof of \Cref{thm:output-alignment}, the
algorithm computes one forward and one backward score array on the chosen
separator. We compute the same arrays, but store them in piecewise-affine form.
By \cref{lem:compressed-row-transfer,cor:compressed-column-transfer}, 
applied to the current recursive subproblem, 
the separator arrays are stored in \(O(n'm')\) working space. 
The time bound follows from the block-wise summation in the proof of \cref{lem:global-score-array-complexity}, 
applied to the corresponding prefix or suffix subproblem: 
a horizontal separator array is computed by a row-wise sweep, and a vertical separator array by a column-wise sweep, in
\( O(m'N' + n'M') \)
time.

The forward score array has to be kept while the backward score array is
computed. If the separator splits $Y$ into parts with \(m_1\) and
\(m_2\) runs, then the two arrays use
\( O(n'm_1+n'm_2)=O(n'm') \)
space in total. 
The case where the separator splits $X$ is symmetric. 
The backward array is computed on reversed suffixes. 
Before scanning it together with the forward array, 
we apply the corresponding affine change of coordinates, 
for instance $\gamma \mapsto N'-\gamma$ for a vertical separator. 
This operation preserves the number of PAF pieces. 
The crossing point is then found by scanning the common refinement of the two piecewise-affine representations of the forward and backward arrays; 
this takes time linear in their total size, hence \(O(n'm')\), 
which is dominated by \(O(m'N'+n'M')\).

Correctness is unchanged from \Cref{thm:output-alignment}: 
every monotone alignment path crosses the chosen separator, 
the score is additive under concatenation of paths, 
and the chosen crossing point maximizes the sum of the forward and backward scores. 
Hence, the recursive decomposition outputs a minimum-cost LCS alignment.

The time recurrence is the same as in the proof of \Cref{thm:output-alignment}. 
With
\( \chi = m'N' + n'M'\),
the two recursive subproblems have total compressed cost at most a constant fraction of \(\chi\), 
and the work spent at the current subproblem is \(O(\chi)\). 
Therefore the total time over the recursion is \( O(m\N+n\M). \)

For the space bound, every recursive subproblem uses at most
\( O(n'm')\le O(nm) \)
working space for its active score arrays. These arrays are discarded before
the recursive calls are made. 
The recursion stack stores only subproblem information and is in \(O(\log n+\log m)\) space, hence within \(O(nm)\) space for nonempty RLE inputs. 
The output is streamed in the same left-to-right order as in \Cref{thm:output-alignment}. 
Thus, the working space is \(O(nm)\), excluding the output.
\end{proof}

The output of \cref{thm:output-alignment-rlespace} should be streamed. 
If the whole alignment is stored explicitly, then an additional $\Theta(\N+\M)$ space is necessary in the worst case.

\section{Conclusion and Discussion}

Our algorithm translates known techniques for plain LCS computation on RLE strings~\cite{freschi04longest} to the minimum-cost LCS tie-breaking objective, resulting in comparable time bounds.
This translation works for the following two reasons.

First, the scoring function $s$ is additive over the steps of an alignment path and is homogeneous inside each RLE block.
Thus, for any two points inside a block, the order of the internal steps is irrelevant; only the number of diagonal steps matters.
Second, a diagonal step is always at least as good as replacing it by one downward and one rightward step.
Hence, an optimal local path uses as many diagonal steps as possible.
Consequently, the block-transfer score \(\tau_{i,j}\) is piecewise affine with only constantly many pieces.
This allows us to compute the boundary-to-boundary maximizations in linear time in the block perimeter, instead of evaluating all pairs of boundary points naively.
A corollary is that other scalar functions with the same two properties can be computed in the same time bound with the same techniques.

A question is whether the information computed at the boundaries can themselves be represented in compressed form to improve space usage.
Another research direction is to improve the time complexity.
Another result for plain LCS computation is due to Clifford et al.~\cite{clifford19rle}, 
who obtain an \(O(mn\log(mn))\)-time algorithm using more refined data structures for the information at the boundaries. 
Since the scoring function used here is piecewise linear, 
it seems plausible to us that similar ideas can further improve the \(O(m\N+n\M)\) time bound. 
We leave this as a direction for future work.

\vspace{1em}
\textbf{Author Contributions and AI Disclosure. }
The authors acknowledge the use of OpenAI's ChatGPT-5.5 Plus to assist with clarity, language, and soundness, and the generation of \cref{tab:notation}.
All AI-generated outputs were thoroughly reviewed, edited, and verified for accuracy and intellectual integrity by the authors.

\bibliographystyle{plain}
\bibliography{literature}

\clearpage
\appendix
\section{Summary of Notation}
\label{app:notation}

\begin{longtable}{@{}p{.22\linewidth}p{.72\linewidth}@{}}
\caption{Variable names and symbols used in the paper.}\label{tab:notation}\\
\toprule
\textbf{Symbol} & \textbf{Meaning}\\
\midrule
\endfirsthead
\toprule
\textbf{Symbol} & \textbf{Meaning}\\
\midrule
\endhead
\midrule
\multicolumn{2}{r@{}}{Continued on the next page}\\
\endfoot
\bottomrule
\endlastfoot
\(\Sigma\) & Alphabet.\\
\(X,Y\) & Input strings. In the RLE setting these are the decoded strings represented by the two run-length encodings.\\
\(\N,\M\) & Lengths of the input strings, i.e., \(\N=|X|\) and \(\M=|Y|\).\\
\([i..j]\) & Integer interval \(\{i,i+1,\ldots,j\}\) if \(i\le j\), and the empty set otherwise.\\
\(P\) & Monotone alignment path from \((0,0)\) to \((\N,\M)\).\\
\((1,0)\) & Downward step of an alignment path; represents a deletion.\\
\((0,1)\) & Rightward step of an alignment path; represents an insertion.\\
\((1,1)\) & Diagonal step of an alignment path; represents either an equal-pair step or a replacement.\\
\(\equ(P)\) & Number of equal-pair steps in the alignment path \(P\).\\
\(\ins(P)\) & Number of insertions in the alignment path \(P\).\\
\(\del(P)\) & Number of deletions in the alignment path \(P\).\\
\(\rep(P)\) & Number of replacements in the alignment path \(P\).\\
\(\ell(P)\) & Alignment length, i.e., the number of columns in the alignment represented by \(P\).\\
\(w\) & Scalar weight used to encode the lexicographic objective as an integer score; the paper sets \(w=\N+\M+1\).\\
\(s(P)\) & Scalar score of \(P\), defined as \(s(P)=w\equ(P)-\ins(P)-\del(P)\).\\
\(F(\alpha,\beta)\) & Maximum scalar score of an alignment path from \((0,0)\) to \((\alpha,\beta)\) in the ordinary \(O(\N\M)\)-time dynamic program.\\
\(\alpha,\beta\) & Grid coordinates in the ordinary dynamic program; \(\alpha\in[0..\N]\) and \(\beta\in[0..\M]\).\\
\(\delta_{\alpha,\beta}\) & Diagonal contribution in the ordinary dynamic program: \(w\) if \(X[\alpha]=Y[\beta]\), and \(0\) otherwise.\\
\(\rle(S)\) & Run-length encoding of a string \(S\).\\
\(n,m\) & Numbers of runs in \(\rle(X)\) and \(\rle(Y)\), respectively.\\
\(b_i,c_j\) & Characters of the \(i\)th run of \(X\) and the \(j\)th run of \(Y\), respectively.\\
\(x_i,y_j\) & Lengths of the \(i\)th run of \(X\) and the \(j\)th run of \(Y\), respectively.\\
\(p_i,q_j\) & Prefix coordinates of run boundaries: \(p_i=\sum_{h=1}^{i-1}x_h\) and \(q_j=\sum_{h=1}^{j-1}y_h\). Run \(i\) of \(X\) occupies character positions \([p_i+1..p_{i+1}]\), and run \(j\) of \(Y\) occupies character positions \([q_j+1..q_{j+1}]\).\\
\(\square_{i,j}\) & Run block induced by the \(i\)th run of \(X\) and the \(j\)th run of \(Y\); it is the point rectangle \([p_i..p_{i+1}]\times[q_j..q_{j+1}]\) in the DP grid, whose unit cells are indexed by \([p_i+1..p_{i+1}]\times[q_j+1..q_{j+1}]\).\\
\(T\) & Temporary array storing top-boundary values in \Cref{fig:block-boundary} and \Cref{alg:block}.\\
\(u,v\) & Local displacement inside a run block in the \(X\)- and \(Y\)-directions, respectively.\\
\(\tau_{i,j}(u,v)\) & Block-transfer score for a path segment with displacement \((u,v)\) inside \(\square_{i,j}\).\\
\(f(\alpha,\beta)\) & Maximum scalar score stored on run boundaries by the block-boundary algorithm.\\
\(\alpha',\beta'\) & Entry-point coordinates on the input boundary of a run block in the block-transfer equations.\\
\(r,t\) & Local indices on the left and top boundary of a run block, respectively.\\
\(T[t]\) & Temporary array storing top-boundary values of the current block: \(T[t]=f(p_i,q_j+t)\).\\
\(L[r]\) & Temporary array storing left-boundary values of the current block: \(L[r]=f(p_i+r,q_j)\).\\
\(B[\beta]\) & Temporary array storing the newly computed bottom-boundary values of the current block.\\
\(R[\alpha]\) & Temporary array storing the newly computed right-boundary values of the current block.\\
\(\eps\) & Empty prefix, used as a row or column label in example dynamic-programming tables.\\
\end{longtable}

\end{document}